\definecolor{webgreen}{rgb}{0,.5,0}
\definecolor{webbrown}{rgb}{.6,0,0}
\newcommand{\seqnum}[1]{\href{https://oeis.org/#1}{\rm \underline{#1}}}
\def\suchthat{ \, : \,}
\DeclareMathOperator{\per}{per}
\DeclareMathOperator{\cexp}{ce}
\newenvironment{smallarray}[1]
{\null\,\vcenter\bgroup\scriptsize
\arraycolsep=.13885em
\hbox\bgroup$\array{@{}#1@{}}}
{\endarray$\egroup\egroup\,\null}
\def\Zee{\mathbb{Z}}
\begin{document}

\theoremstyle{plain}
\newtheorem{theorem}{Theorem}
\newtheorem{corollary}[theorem]{Corollary}
\newtheorem{lemma}[theorem]{Lemma}
\newtheorem{proposition}[theorem]{Proposition}

\theoremstyle{definition}
\newtheorem{definition}[theorem]{Definition}
\newtheorem{example}[theorem]{Example}
\newtheorem{conjecture}[theorem]{Conjecture}
\newtheorem{problem}[theorem]{Problem}

\theoremstyle{remark}
\newtheorem{remark}[theorem]{Remark}

\title{Properties of a Ternary Infinite Word}

\author{
James Currie\footnote{
Department of Math/Stats, University of Winnipeg, 515 Portage Ave., Winnipeg, MB, R3B 2E9, Canada; e-mail
\href{mailto:j.currie@uwinnipeg.ca}{\tt j.currie@uwinnipeg.ca}.}
\and
Pascal Ochem\footnote{LIRMM, CNRS, Universit\'e de Montpellier, France; e-mail \href{ochem@lirmm.fr}{\tt ochem@lirmm.fr}.}
\and
Narad Rampersad\footnote{Department of Math/Stats, University of Winnipeg, 515 Portage Ave., Winnipeg, MB, R3B 2E9, Canada; e-mail
\href{mailto:n.rampersad@uwinnipeg.ca}{\tt n.rampersad@uwinnipeg.ca}.} \and
Jeffrey Shallit\footnote{
School of Computer Science,
University of Waterloo,
Waterloo, ON  N2L 3G1,
Canada; e-mail
\href{mailto:shallit@uwaterloo.ca}{\tt shallit@uwaterloo.ca}.}}

\maketitle

\begin{abstract}
We study the properties of the ternary infinite word
$${\bf p} = 012102101021012101021012 \cdots,$$ that is, the fixed point of the map $h: 0 \rightarrow 01$, $1 \rightarrow 21$, $2 \rightarrow 0$.   We determine its factor complexity, critical exponent, and prove that it is $2$-balanced.   We compute its abelian complexity and determine the lengths of its bispecial factors.  Finally, we give a characterization of $\bf p$ in terms of avoided factors.
\end{abstract}

\section{Introduction}

One of the themes of combinatorics on words is the study of particular
infinite words with interesting properties.   For example, in one of the very earliest results in this area, Thue proved that the
Thue-Morse word 
$$ {\bf t} = 0110100110010110 \cdots $$
avoids {\it overlaps}:  factors of the form $axaxa$ with $a$ a single letter
and $x$ a possibly empty word \cite{Thue:1912,Berstel:1995}.  He
also proved that the word
$$ {\bf vtm} = 2102012101202102012021 \cdots,$$
avoids {\it squares}:  factors of the form $yy$ with $y$ nonempty.

More generally, one can study other kinds of repetitions.
We
say that a finite word $w=w[1..n]$ has {\it period\/}
$p\geq 1$ if $w[i]=w[i+p]$ for $1 \leq i \leq n-p$.
The smallest period of a word $w$ is called
{\it the\/} period, and we write it as
$\per(w)$.  The {\it exponent\/} of a finite word
$w$, written $\exp(w)$ is defined to be
$|w|/\per(w)$.   We say a word
(finite or infinite) is
{\it $\alpha$-free\/} if the exponent of all its nonempty factors is $>\alpha$.   We say a word is {\it $\alpha^+$-free\/} if the exponent of all its nonempty factors is $\geq\alpha$.  The {\it critical exponent\/} of
a finite or infinite word $x$ is the supremum, over all
nonempty finite factors $w$ of $x$, of
$\exp(w)$; it is written $\cexp(x)$. 
The critical exponent of a word can be either rational or irrational.   If it is rational, then it can either be attained by a particular finite factor, or not attained.  For example, the critical exponent of both $\bf t$ and $\bf vtm$ is $2$, but it is attained in the former case and not attained in the latter.  If the critical exponent $\alpha$ is attained, we typically write it as $\alpha^+$.  An overlap is a $2^+$ power, so the Thue-Morse word is $2^+$-free.

The Fibonacci word
$$ {\bf f} = 010010100100101001010 \cdots $$
is the fixed point of the morphism $0 \rightarrow 01$, $1 \rightarrow 0$.
Karhum\"aki proved \cite{Karhumaki:1983} that $\bf f$ has no fourth powers (i.e., blocks of the
form $xxxx$, with $x$ nonempty), and  Mignosi and Pirillo \cite{Mignosi&Pirillo:1992} proved
that the critical exponent of $\bf f$ is $(5+\sqrt{5})/2$.

Another aspect of infinite words that has been studied is 
{\it balance}.  We say that a finite or infinite word $x$ is
$t$-balanced if for all equal-length factors $y, z$ of $x$,
and all letters $a$,
the inequality $\left| |y|_a - |z|_a \right| \leq t$ is satisfied.   
As is well-known, the Fibonacci word $\bf f$ (and more generally, every Sturmian word) is $1$-balanced
\cite{Morse&Hedlund:1940,Coven&Hedlund:1973}.

A third aspect is {\it factor complexity}, also called {\it subword complexity}. For an infinite word
$\bf x$, the factor complexity function $\rho_{\bf x} (n)$ counts
the number of distinct factors of length $n$ in $\bf x$.   
Morse and Hedlund \cite{Morse&Hedlund:1940} proved that $\rho_{\bf f} (n) = n+1$ for all
$n \geq 0$.   There is also the abelian analogue of factor
complexity, where we count two factors as the same if they are
permutations of each other \cite{Richomme&Saari&Zamboni:2011}.

In this note we study various aspects of the word
$$ {\bf p} = 012102101021012101021012 \cdots ,$$
fixed point of the map $h$ sending $0 \rightarrow 01$, $1 \rightarrow 21$, $2 \rightarrow 0$.   This word is not automatic (because, as we will see,
letters occur with irrational densities).   It is not
Sturmian (because it is over a $3$-letter alphabet).  Neither is it episturmian \cite{Glen&Justin:2009},
because its set of subwords is not closed under reversal: $\bf p$ contains $02$,
but avoids $20$.

We determine its factor complexity, its critical exponent, and prove
that it is $2$-balanced.   A novel aspect of our work is that much of it is carried out using the
{\tt Walnut} theorem-prover \cite{Mousavi:2016,Shallit:2022}.   This software tool can prove or disprove assertions phrased in first-order logic about automatic sequences and their generalizations.   These ideas were used previously to study the Tribonacci word \cite{Mousavi&Shallit:2015}, but the word $\bf p$ provides some new complications.   All the files required to carry out the computations are available at the last author's website:\\
\centerline{\url{ https://cs.uwaterloo.ca/~shallit/papers.html} \ .} \\

The word $\bf p$ has been studied previously.  For example, it is sequence
\seqnum{A287072} in the OEIS.    It also appears implicitly in \cite{Cassaigne&Labbe&Leroy:2017},
where it is (up to renaming of the letters)
the fixed point of the word $c_2 c_1$.

The results in this paper are applied to an
avoidability problem in the companion paper
\cite{Currie&Mol&Ochem&Rampersad&Shallit:2022}.

\section{A Pisot numeration system}\label{pisot}

The properties of $\bf p$ are intimately related to a particular numeration system $P4$, which we discuss now.

Consider the following linear recurrence:
$$X_1 = 1, X_2 = 2, X_3 = 4, X_4 = 7, \text{ and }
	X_n = X_{n-1} + X_{n-2} + X_{n-4} \text{ for $n \geq 0$ }.$$
Here are the first few terms of this recurrence:
\begin{table}[H]
\begin{center}
\begin{tabular}{c|cccccccccccccc}
$n$ & 1 & 2 & 3 & 4 & 5 & 6 & 7 & 8 & 9 & 10 & 11 & 12 & 13 & 14 \\
\hline
$X_n$ & 1 & 2 & 4 & 7 & 12 & 21 & 37 & 65 & 114 & 200 & 351 & 616 & 1081 & 1897 
\end{tabular}
\end{center}
\caption{The recurrence $X_n$}
\end{table}
This is sequence \seqnum{A005251} in the OEIS.  It is easy to verify it also satisfies the simpler recurrence
$X_n = 2X_{n-1} - X_{n-2} + X_{n-3}$ for $n \geq 4$.

We can consider representing natural numbers as a sum of the $X_i$,
as follows: $N = \sum_{1 \leq i \leq t} e_i X_i$, where 
$e_i \in \{0,1\}$.   If we impose the following two rules
on such a representation, namely
\begin{itemize}
\item[(a)]  $(e_i, e_{i+2}, e_{i+3}) \not= (1,0,1,1)$; and
\item[(b)]  $(e_i, e_{i+1}, e_{i+2}) \not= (1,1,1)$.
\end{itemize}
then this representation is unique, and can be written as
a binary string $(N)_{P} := e_t e_{t-1} \cdots e_2 e_1$.  For example, here are the
first few representations of numbers in this numeration system:
\begin{table}[H]
\begin{center}
\begin{tabular}{c|c||c|c}
$n$ & $(n)_P$ & $n$ & $(n)_P$ \\
\hline
 1 &  1      & 13 &  10001  \\
 2 &  10     & 14 &  10010  \\
 3 &  11     & 15 &  10011  \\
 4 &  100    & 16 &  10100  \\
 5 &  101    & 17 &  10101  \\
 6 &  110    & 18 &  10110  \\
 7 &  1000   & 19 &  11000  \\
 8 &  1001   & 20 &  11001  \\
 9 &  1010   & 21 &  100000 \\
10 &  1011   & 22 &  100001 \\
11 &  1100   & 23 &  100010 \\
12 &  10000  & 24 &  100011 \\
\end{tabular}
\end{center}
\end{table}
These representations are, in fact, the ones resulting
by applying the greedy algorithm,
and the conditions (a) and (b) follow from a theorem of
Fraenkel \cite{Fraenkel:1985}.

A representation in this system $P4$ is valid if and only if it
contains no occurrence of $111$ or $1101$.  It follows that
the language of all valid representations is recognized
by the automaton in Figure~\ref{valid}.
\begin{figure}[H]
\begin{center}
\includegraphics[width=5in]{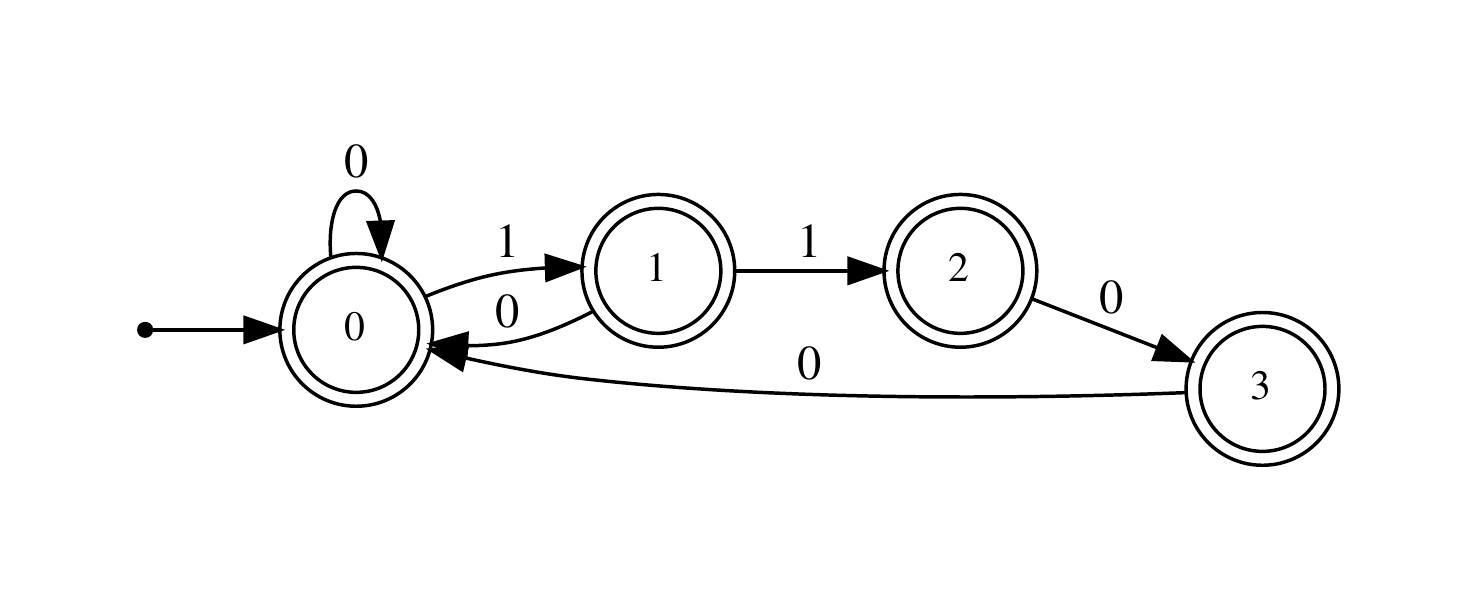}
\end{center}
\caption{Automaton recognizing the valid representations.}
\label{valid}
\end{figure}

The characteristic polynomial of the recurrence is
$X^4 - X^3 -X^2 -1 = (X+1) (X^3-2X^2+X-1)$.   The
second term has one real zero, namely
$$ \beta_1 = 
{{(100 + 12 \sqrt{69})^{1/3}} \over 6}
+ 
{2 \over {3 (100+12 \sqrt{69})^{1/3}}} + 2/3
\doteq 1.7548776662466927600495,$$
and two imaginary zeros that lie inside the
unit circle.  Therefore $\beta_1$ is a Pisot number,
and so, by the results
in \cite{Frougny&Solomyak:1996,Bruyere&Hansel:1997}
we know that there is a finite automaton $A$ recognizing
the addition relation $x+y=z$, where $x,y,z$ are
represented in the numeration system described above,
with inputs in parallel and the shorter padded with
leading zeros.  Furthermore, there is an algorithm
to compute $A$ for any Pisot number.

Instead of applying this algorithm to compute $A$,
we took a different approach.  Namely, we ``guessed''
the adder using the Myhill-Nerode theorem, and then verified
it using an incrementer constructed and verified by hand.
This incrementer computes the relation $y=x+1$, and is
illustrated in Figure~\ref{inc-fig}.
Here the inputs are $x$ and $y$
in parallel, both represented in the numeration system $P4$.
\begin{figure}[H]
\begin{center}
\includegraphics[width=6in]{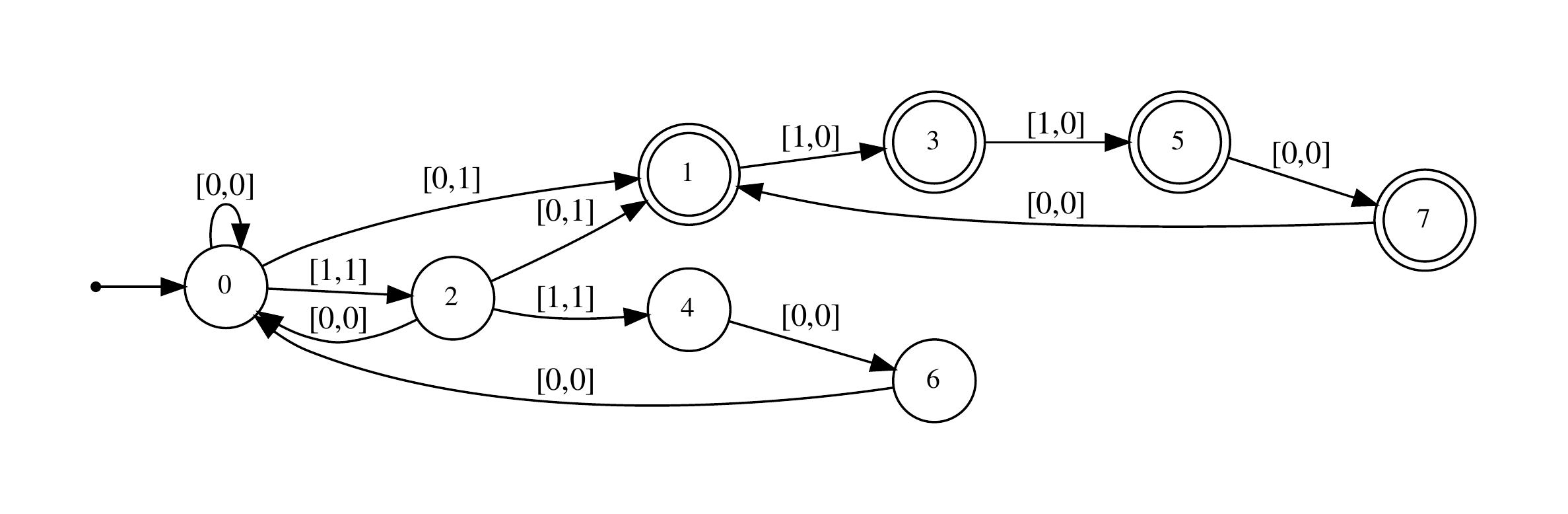}
\end{center}
\caption{Incrementer in the numeration system.}
\label{inc-fig}
\end{figure}
Once we have the incrementer, the correctness of the
adder can be verified as done in 
\cite{Mousavi&Schaeffer&Shallit:2016}.  The adder
has 64 states.

It turns out that the word $\bf p$ is automatic
in this numeration system; this means it is computed by a deterministic finite automaton with output (DFAO) taking the $P4$ representation of $n$ as input
and outputting (in the last state reached) the value of ${\bf p}[n]$.
The DFAO computing it is
depicted in Figure~\ref{paut}.
\begin{figure}[H]
\begin{center}
\includegraphics[width=6in]{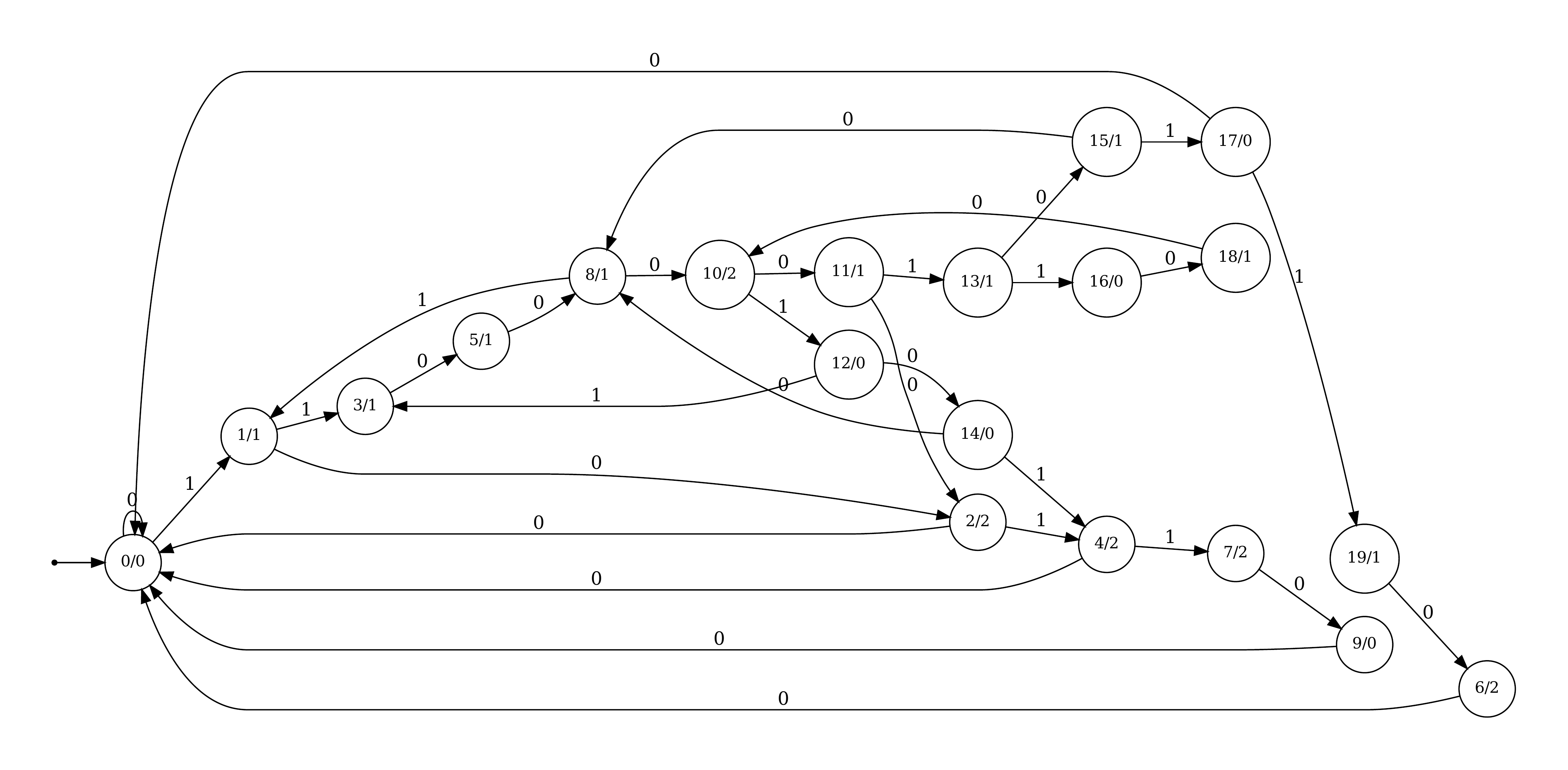}
\end{center}
\caption{DFAO computing $\bf p$ in the numeration system.}
\label{paut}
\end{figure}

This can be verified as follows.
Let $h:0 \rightarrow 01$, $1 \rightarrow 21$, $2 \rightarrow 0$,
and let the automaton be represented by a morphism $\varphi$ encoding the transitions of the automaton, and a coding
$\tau$ giving the outputs for each state.   Then we need to verify the following identities:
\begin{align*}
\tau(\varphi^n(0)) &= h^n(0) &
\tau(\varphi^n(1)) &= h^n(1) \\
\tau(\varphi^n(23)) &= h^n(21) &
\tau(\varphi^n(45)) &= h^n(21) \\
\tau(\varphi^n(78)) &= h^n(21) &
\tau(\varphi^n(9A)) &= h^n(02) \\
\tau(\varphi^n(BC)) &= h^n(10) &
\tau(\varphi^n(2DE3)) &= h^n(2101) \\
\tau(\varphi^n(4FG8)) &= h^n(2101) &
\tau(\varphi^n(HIA)) &= h^n(012) \\
\tau(\varphi^n(JA)) &= h^n(12) &
\tau(\varphi^n(6)) &= h^n(2) \\
\end{align*}
where $A = 10$, $B = 11$, etc.  This can be done by a
tedious induction on $n$, which we omit.   Just to demonstrate one needed identity:
\begin{align*}
\tau(\varphi^n(9A)) &= \tau(\varphi^{n-1}(0BC)) \\
&= \tau(\varphi^{n-1} (0)) \tau(\varphi^{n-1} (BC)) \\
&= h^{n-1} (0) h^{n-1} (10) \\
&= h^{n-1} (010) \\
&= h^n(02),
\end{align*}
as desired.

\section{Factor complexity of $\bf p$}

In this section we prove that the factor complexity of $\bf p$ is $2n+1$.   This is also a consequence of more general results of \cite{Cassaigne&Labbe&Leroy:2017}.   Also see
\cite{Cassaigne&Labbe&Leroy:2022}.

There is a well-established computational method for determining factor complexity, as discussed in \cite{Shallit:2021f}.
The first step is to create an automaton that, given integers $i,j,n$ as input, decides if
${\bf p}[i..i+n-1]={\bf p}[j..j+n-1]$.   Normally we would do this with the following {\tt Walnut} command, where
{\tt PI} is a file containing the automaton in Figure~\ref{paut}:
\begin{verbatim}
def pisotfaceq "?msd_pisot4 At (t<n) => PI[i+t]=PI[j+t]":
\end{verbatim}
However, in this case, the attempt fails.  {\tt Walnut}
tries to determinize an automaton with 37351 states, and fails
even with 5 Terabytes of storage and many hours of computation, so we need a different approach.

Instead, we use the approach discussed in \cite[\S 6.3]{Shallit:2022}.  We
``guess'' an automaton for ``pisotfaceq" using the Myhill-Nerode theorem.
States are labeled with prefixes of Pisot representations.  Two states
are guessed to be the same if all suffixes of length $\leq 5$ (representing $1 + 8^2 + \cdots + 8^5 = 37449$
words) give the same result.
This gives us an automaton with 1080 states that we can represent
as a file called {\tt pisi.txt}.

Next, we can use induction and
{\tt Walnut} together to prove that our guess is correct.  We
do this as follows:
\begin{verbatim}
eval zeros "?msd_pisot4 Ai,j $pisi(i,j,0)":
eval induc "?msd_pisot4 Ai,j,n ($pisi(i,j,n) & PI[i+n]=PI[j+n]) 
   => $pisi(i,j,n+1)":
\end{verbatim}
Both of these return {\tt TRUE},
so our guess is correct.

Next, we create a linear representation $(v, \zeta, w)$ for the subword complexity function, using the
following {\tt Walnut} command:
\begin{verbatim}
eval pisotsc n "?msd_pisot4 Aj j<i => ~$pisi(i,j,n)":
\end{verbatim}
This gives us a linear representation of rank 131.  When we minimize it using
a Maple program, we get
this linear representation of rank 16:
\begin{align*}
v &= \left[
\begin{smallarray}{cccccccccccccccc}
1&0&0&0&0&0&0&0&0&0&0&0&0&0&0&0
\end{smallarray} \right] 
\end{align*}
\begin{align*}
\zeta(0) &= \left[
\begin{smallarray}{cccccccccccccccc}
1&  0&  0&  0&  0&  0&  0&  0&  0&  0&  0&  0&  0&  0&  0&  0\\
   0&  0&  1&  0&  0&  0&  0&  0&  0&  0&  0&  0&  0&  0&  0&  0\\
   0&  0&  0&  0&  1&  0&  0&  0&  0&  0&  0&  0&  0&  0&  0&  0\\
   0&  0&  0&  0&  0&  0&  1&  0&  0&  0&  0&  0&  0&  0&  0&  0\\
   0&  0&  0&  0&  0&  0&  0&  1&  0&  0&  0&  0&  0&  0&  0&  0\\
  -1&  0&  1&  0&  0&  0&  0&  1&  0&  0&  0&  0&  0&  0&  0&  0\\
  -1&  0&  0&  0&  1&  0&  0&  1&  0&  0&  0&  0&  0&  0&  0&  0\\
  -1&  0&  1&  0& -1&  0&  0&  2&  0&  0&  0&  0&  0&  0&  0&  0\\
  -2&  0&  2&  0& -1&  0&  0&  2&  0&  0&  0&  0&  0&  0&  0&  0\\
   0&  0&  0&  0&  0&  0&  0&  0&  0&  0&  0&  0&  1&  0&  0&  0\\
  -4&  0&  3&  0& -1&  0&  0&  3&  0&  0&  0&  0&  0&  0&  0&  0\\
   0&  0&  0&  0&  0&  0&  0&  0&  0&  0&  0&  0&  0&  0&  1&  0\\
  -5&  0&  2&  0&  0&  0&  0&  4&  0&  0&  0&  0&  0&  0&  0&  0\\
   0&  0&  0&  0&  0&  0&  0&  0&  0&  0&  0&  0&  0&  0&  0&  1\\
  -8&  0&  3&  0&  0&  0&  0&  6&  0&  0&  0&  0&  0&  0&  0&  0\\
 -13&  0&  5&  0& -1&  0&  0& 10&  0&  0&  0&  0&  0&  0&  0&  0
\end{smallarray} \right] 
& \zeta(1) &= \left[
\begin{smallarray}{cccccccccccccccc}
0&1&0&0&0&0&0&0&0&0&0&0&0&0&0&0\\
 0&0&0&1&0&0&0&0&0&0&0&0&0&0&0&0\\
 0&0&0&0&0&1&0&0&0&0&0&0&0&0&0&0\\
 0&0&0&0&0&0&0&0&0&0&0&0&0&0&0&0\\
 0&0&0&0&0&0&0&0&1&0&0&0&0&0&0&0\\
 0&0&0&0&0&0&0&0&0&1&0&0&0&0&0&0\\
 0&0&0&0&0&0&0&0&0&0&0&0&0&0&0&0\\
 0&0&0&0&0&0&0&0&0&0&1&0&0&0&0&0\\
 0&0&0&0&0&0&0&0&0&0&0&1&0&0&0&0\\
 0&0&0&0&0&0&0&0&0&0&0&0&0&0&0&0\\
 0&0&0&0&0&0&0&0&0&0&0&0&0&1&0&0\\
 0&0&0&0&0&0&0&0&0&0&0&0&0&0&0&0\\
 0&0&0&0&0&0&0&0&0&0&0&0&0&0&0&0\\
 0&0&0&0&0&0&0&0&0&0&0&0&0&0&0&0\\
 0&0&0&0&0&0&0&0&0&0&0&0&0&0&0&0\\
 0&0&0&0&0&0&0&0&0&0&0&0&0&0&0&0
\end{smallarray} \right] 
& w &= \left[
\begin{smallarray}{c}
1\\
  3\\
  5\\
  7\\
  9\\
 11\\
 13\\
 15\\
 17\\
 21\\
 27\\
 31\\
 37\\
 49\\
 55\\
 87
\end{smallarray} \right] \\
\end{align*}

With all these tools at our 
disposal, we can now prove the
desired result.
\begin{theorem}
The factor complexity of 
$\bf p$ is $2n+1$.
\label{subword}
\end{theorem}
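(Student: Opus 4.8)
The plan is to finish the proof using the verified linear representation $(v,\zeta,w)$ of rank $16$ that is already in hand, which by construction satisfies $\rho_{\bf p}(n)=v\,\zeta\bigl((n)_P\bigr)\,w$, where $\zeta$ is extended multiplicatively to binary strings over the digit alphabet $\{0,1\}$. Tracing the reachable row vectors shows that the empty representation gives $v\,w=1$, that $(1)_P=1$ gives $3$, that $(2)_P=10$ gives $5$, and so on; equivalently, the leading entries $1,3,5,7,9,11,13,15,17$ of $w$ are exactly the values $2n+1$ for $n=0,\dots,8$. So $\rho_{\bf p}(n)=2n+1$ holds for small $n$, and the only real task is to promote this to all $n$ by comparing two linear representations for equality.

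First I would produce a linear representation $(v',\zeta',w')$ for the affine function $g(n)=2n+1$ in the same system $P4$. This is possible because in any linear (in particular Pisot) numeration system the value function $n\mapsto n$ is regular: reading $(n)_P=e_k\cdots e_1$ from the most significant digit, the partial value together with the relevant window of recurrence weights coming from $X_n=X_{n-1}+X_{n-2}+X_{n-4}$ can be tracked by a fixed finite family of matrices. Since constants are trivially regular and regular sequences are closed under $\mathbb{Z}$-linear combinations, $g$ has a linear representation of small rank, which one may build by hand from the recurrence or extract from \texttt{Walnut}.

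Next I would assemble a linear representation for the difference $D(n)=\rho_{\bf p}(n)-(2n+1)$ by the usual direct-sum construction: put $v''=[\,v \mid -v'\,]$, let $\zeta''(a)$ be the block-diagonal matrix with blocks $\zeta(a)$ and $\zeta'(a)$ for each $a\in\{0,1\}$, and set $w''=\left[\begin{smallmatrix} w\\ w'\end{smallmatrix}\right]$, so that $D(n)=v''\,\zeta''\bigl((n)_P\bigr)\,w''$. It then remains to check that $D$ is the identically-zero series, and this is a finite, terminating computation. The row space $V=\mathrm{span}\{v''\zeta''(x)\}$ taken over all valid representations $x$ has dimension at most the rank $m$ of the combined representation, is already spanned by the vectors $v''\zeta''(x)$ with $|x|<m$, and $D\equiv 0$ is equivalent to $w''$ being orthogonal to $V$. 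Concretely one builds a basis of $V$ by breadth-first multiplication by $\zeta''(0)$ and $\zeta''(1)$ restricted to valid prefixes, confirms the dimension has stabilized, and checks $v''\zeta''(x)\,w''=0$ on the finitely many spanning words; equivalently, it suffices to verify $D(n)=0$ for all $n$ whose $P4$ representation is shorter than $m$.

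Once the difference series is shown to vanish, we conclude $\rho_{\bf p}(n)=2n+1$ for all $n\ge 0$, as claimed. The main obstacle is not conceptual but lies in the verification step: producing a correct linear representation for $2n+1$ adapted to the nonstandard $P4$ digit set, and then carrying out the zero-series check accurately over the restricted language of valid representations rather than over all of $\{0,1\}^*$. A purely combinatorial alternative would be to prove instead that $\rho_{\bf p}(n+1)-\rho_{\bf p}(n)=2$ for every $n$ by counting right-special factors and their extensions directly with the automaton \texttt{pisi}, but the zero-series comparison above is the most direct route given the machinery already assembled.
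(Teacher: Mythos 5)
Your overall strategy is essentially the paper's: take the verified rank-16 linear representation $(v,\zeta,w)$ for $\rho_{\bf p}$, build a linear representation for $2n+1$ in the same $P4$ system (the paper does this with a one-line {\tt Walnut} command, obtaining a representation of rank 108), and then certify that the two representations compute the same function. Where you differ is the certification step: the paper minimizes \emph{both} representations and observes that the minimized outputs are literally identical matrices, which immediately forces the two series to agree on every input string, in particular on every valid representation; you instead form the difference representation by direct sum and run a zero-test.

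The zero-test as you describe it has a genuine gap. The classical fact you invoke --- that the row space $\mathrm{span}\{v''\zeta''(x)\}$ is already spanned by the vectors with $|x|<m$, so that vanishing on inputs of length $<m$ implies vanishing everywhere --- is a theorem about series over the \emph{free monoid} $\{0,1\}^*$. Its proof rests on the observation that once $W_{k+1}=W_k$ (with $W_k$ the span over all words of length $\le k$), multiplying by $\zeta''(0)$ and $\zeta''(1)$ cannot leave $W_k$, so the chain stabilizes within $m$ steps. When you restrict the spanning words to the language $L$ of valid $P4$ representations, as you correctly insist you must, this argument breaks: if $v''\zeta''(x)=\sum_i c_i\,v''\zeta''(y_i)$ with the $y_i$ valid and shorter, and $xa$ is valid, the words $y_ia$ need \emph{not} be valid (a $y_i$ ending in $11$ or $110$ becomes forbidden when you append $1$, creating $111$ or $1101$), so the vectors $v''\zeta''(y_ia)$ are not in your spanning set, and one observed step of dimension stabilization does not imply permanent stabilization. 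Hence ``check all $n$ whose representation is shorter than $m$'' is not a sound stopping criterion. The standard repair is to run the zero-test on the product of the difference representation with the DFA recognizing $L$ --- equivalently, track one subspace per DFA state and require orthogonality of $w''$ only at accepting states --- which terminates within $m\cdot|Q|$ steps, $|Q|$ being the number of states of the validity automaton. (Alternatively one can test the difference series for being identically zero over \emph{all} of $\{0,1\}^*$, which for {\tt Walnut}-produced representations happens to hold since they output $0$ on invalid strings; but then your breadth-first search must range over all words, not just valid prefixes.) With either repair your argument goes through; as written, the termination claim is false in general.
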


\begin{proof}
It remains to show that the
linear representation
$(v, \zeta, w)$ computes the
function $2n+1$.  To do this, 
we compute a linear representation
for $2n+1$ using
the following {\tt Walnut}
command:
\begin{verbatim}
eval c2n1 n "?msd_pisot4 i<2*n+1":
\end{verbatim}
This gives us a linear representation
$(v', \zeta', w')$ of rank 108.
When we minimize it, we get {\it exactly\/} the same linear 
representation 
$(v, \zeta, w)$.  This shows they are the same and proves the result.
\end{proof}

\section{Critical exponent of $\bf p$}

In this section we determine
the critical exponent of $\bf p$.

\begin{theorem}\label{gamma}
The critical exponent of
$\bf p$ is $\gamma + 1 = (8-\beta_1+2\beta_1^2)/5
\doteq 2.480862716147236962394265321$.    Here $\gamma+1$ is
the real zero of
$5X^3-26X^2+43X-23$.
\end{theorem}

\begin{proof}
We use the strategy previously
employed for the Tribonacci word
\cite{Mousavi&Shallit:2015}.

Now that we have a DFAO for $\bf p$, we can find all the
periods corresponding to overlaps.  We do this with the
following Walnut command.
\begin{verbatim}
def pisotlargepow "?msd_pisot4 Ei (n>=1) & At (t>=i & t<=i+n) => PI[t]=PI[t+n]":
\end{verbatim}
This gives the automaton depicted in Figure~\ref{psq}.
\begin{figure}[H]
\begin{center}
\includegraphics[width=5in]{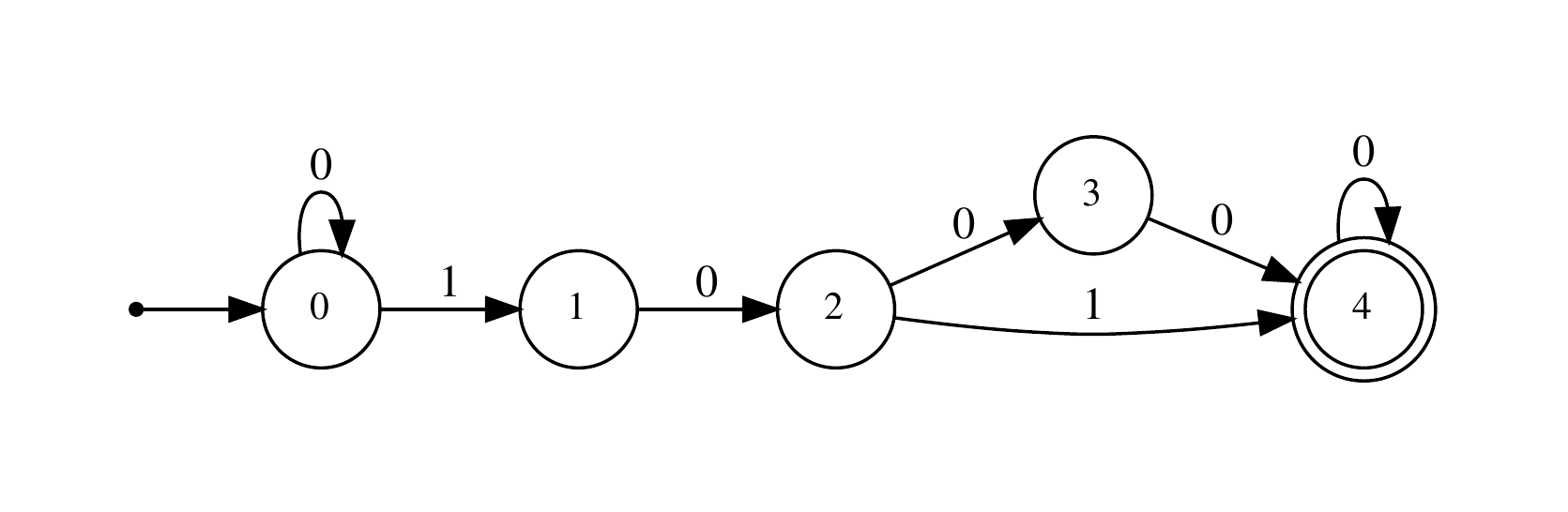}
\end{center}
\caption{DFAO computing orders of overlaps in $\bf p$.}
\label{psq}
\end{figure}
Constructing this automaton involved determinizing an NFA with 10,859 states and
minimizing a DFA with 13,114,119 states; the calculation needed 90 Gigs of storage and 1100 seconds of CPU time.  

As you can see from inspecting it, the periods are given by 
the representations $1010^*$ and $10000^*$.
We now compute the maximal repetitions in $\bf p$ by looking only at
the periods given above:
\begin{verbatim}
def maximalreps "?msd_pisot4 Ei
(PI[i+n] != PI[i+n+p]) & $pisotlargepow(p) &
    (Aj (j<n & $pisotlargepow(p)) => PI[i+j] = PI[i+j+p])":
def highestpow "?msd_pisot4 (p>=1) & $pisotlargepow(p) &
    $maximalreps(n,p) & (Am $maximalreps(m,p) => m <= n)":
\end{verbatim}
This gives the automaton in Figure~\ref{highest}.  It recognizes all
pairs $(n,p)$ such that $n/p + 1$ is a maximal power in $\bf p$ with
period $p$, for the $p$ given above.
\begin{figure}[H]
\begin{center}
\includegraphics[width=6in]{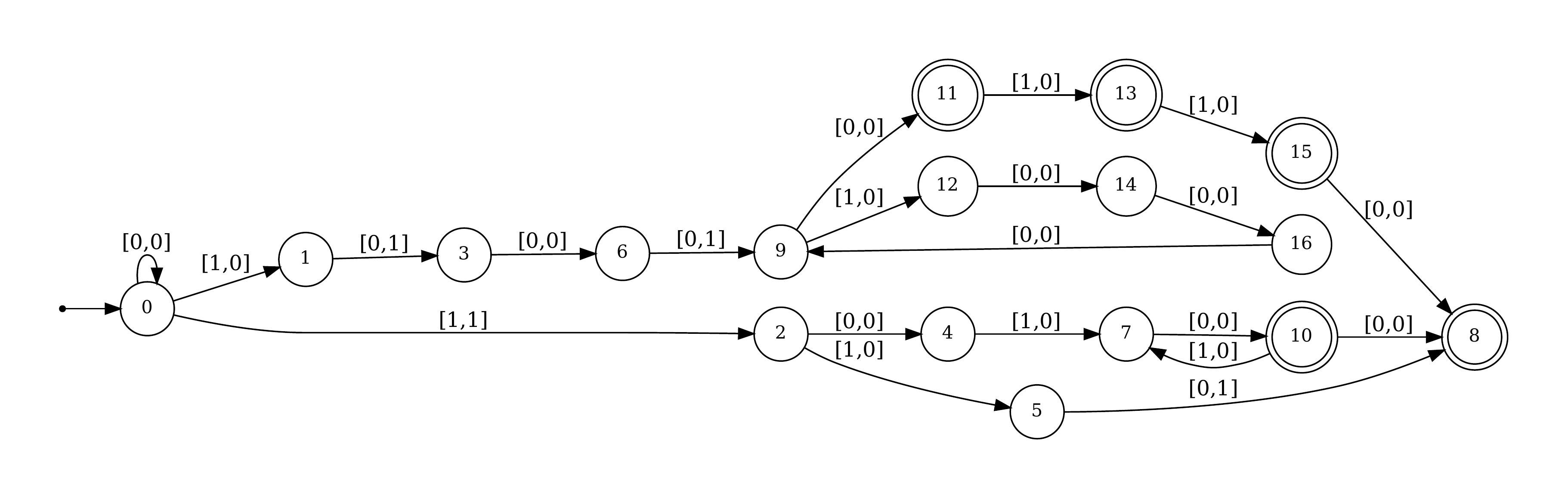}
\end{center}
\caption{DFAO computing maximal powers}
\label{highest}
\end{figure}
By inspection we see these are as follows:
\begin{align*}
& [1,1][1,0][0,1] \\
& [1,1][0,0][1,0] ([0,0][1,0])^* [0,0] \\
& [1,1][0,0][1,0] ([0,0][1,0])^* [0,0][0,0] \\
& [1,0][0,1][0,0][0,1] ([1,0][0,0][0,0][0,0])^* [0,0] \\
& [1,0][0,1][0,0][0,1] ([1,0][0,0][0,0][0,0])^* [0,0] [1,0] \\
& [1,0][0,1][0,0][0,1] ([1,0][0,0][0,0][0,0])^* [0,0] [1,0] [1,0] \\
& [1,0][0,1][0,0][0,1] ([1,0][0,0][0,0][0,0])^* [0,0] [1,0] [1,0] [0,0]
\end{align*}
These correspond to, respectively, exponents of 
\begin{align*}
6/5  \\
(\sum_{1 \leq i \leq n+2} X_{2i})/X_{2n+4}, \quad n \geq 0 \\
(\sum_{1 \leq i \leq n+2} X_{2i+1})/X_{2n+5}, \quad n \geq 0 \\
(\sum_{1 \leq i \leq n+1} X_{4i+1})/(X_{4n+4}+X_{4n+2}), \quad n \geq 0 \\
(1+\sum_{1 \leq i \leq n+1} X_{4i+2})/(X_{4n+5}+X_{4n+3}), \quad n\geq 0 \\
(3+\sum_{1 \leq i \leq n+1} X^{4i+3})/(X_{4n+6}+X_{4n+4}), \quad n \geq 0 \\
(6+\sum_{1 \leq i \leq n+1} X^{4i+4})/(X_{4n+7}+X_{4n+5}), \quad n \geq 0.
\end{align*}

It now remains to check that these expressions are all less than
$\gamma$ and the last six tend to $\gamma$ from below.
We explain how to do this for the second expression; the others
are similar.

First, by the standard theory of linear recurrences
\cite{Everest&vanderPoorten&Shparlinski&Ward:2003}, we know that
$$X_n = \alpha_1 \beta_1^n + \alpha_2 \beta_2^n + \alpha_3 \beta_3^n,$$
where $\beta_1, \beta_2, \beta_3$ are the zeros of
$X^3-2X^2+X-1$, and $\beta_1 \doteq 1.7548776662466927600495$ is the unique real zero.
By solving the appropriate linear system, we
find that $\alpha_1 = (\beta_1^2 + 6\beta_1 + 3)/23$.
Furthermore,
$|\alpha_2| + |\alpha_3| < 1/2$.
Since $|\beta_2| = |\beta_3|$ and the
product of the zeros is equal to the constant term of the defining polynomial, which is $1$,
we get
    $|\beta_2|^2 \beta_1 = 1$,
which gives
    $|\beta_2| = \sqrt{1/\beta_1} = \beta_1 - 1$
by the defining equation. 
 It follows that
$$ |X_n - \alpha_1 \beta_1^n| = |\alpha_2 \beta_2^n + \alpha_3 \beta_3^n|
\leq (|\alpha_2|+|\alpha_3|)(\beta_1-1)^n < (\beta_1-1)^n.$$

Next, one can prove by induction (or using Walnut!) that
$$\sum_{1 \leq i \leq n} X_{2i} = {{3X_{2n} - X_{2n+1} + 2X_{2n+2} - 6} \over 5}.$$
It follows that
\begin{align*}
\sum_{1 \leq i \leq n} X_{2i} & \leq
{{3X_{2n} - X_{2n+1} + 2X_{2n+2} - 6} \over 5} \\
&\leq {{3-\beta_1 + 2 \beta_1^2} \over 5} (\alpha_1 \beta_1^{2n}) + 6 (\beta_1-1)^{2n} - 6/5.
\end{align*}
Dividing by $X_{2n}$, we see that the quotient
tends to $\gamma$ from below, where
$\gamma = {{3-\beta_1 + 2\beta_1^2} \over 5}$.  Then $\gamma+1$ is the desired
critical exponent.
\end{proof}

\section{Synchronization and balance} 

We now show that the functions $c_i : n \rightarrow |{\bf p}[0..n-1]|_i$
are synchronized in this numeration system, for $i \in \{ 0,1, 2\}$.
This means that there exists an automaton $A_i$ taking $n$ and $x$ as inputs (in $P4$ representation) and accepting if $x = c_i (n)$,
for $i = 0,1,2$.  For more information about synchronization, see \cite{Shallit:2021h}.

To do so, we ``guess'' the automata for the 
$c_i$ using the Myhill-Nerode theorem
and then verify our guesses using {\tt Walnut}.
Here is the code for verification, where
{\tt psynch0, psynch1, psynch2} are the
guessed automata:
\begin{verbatim}
eval tmp0 "?msd_pisot4 An,x (($psynch0(n,x) & PI[n]=@0) => $psynch0(n+1,x+1)) 
& (($psynch0(n,x) & PI[n]!=@0) => $psynch0(n+1,x))":
eval tmp1 "?msd_pisot4 An,x (($psynch1(n,x) & PI[n]=@1) => $psynch1(n+1,x+1)) 
& (($psynch1(n,x) & PI[n]!=@1) => $psynch1(n+1,x))":
eval tmp2 "?msd_pisot4 An,x (($psynch2(n,x) & PI[n]=@2) => $psynch2(n+1,x+1)) 
& (($psynch2(n,x) & PI[n]!=@2) => $psynch2(n+1,x))":
\end{verbatim}
and all of these return
{\tt TRUE}.

Next, we create synchronized
automata for
$|{\bf p}[i..i+n-1]|_i$:
\begin{verbatim}
def pcount0 "?msd_pisot4 Ex,y $psynch0(i,x) & $psynch0(i+n,y) & y=x+z":
# 1687 states
def pcount1 "?msd_pisot4 Ex,y $psynch1(i,x) & $psynch1(i+n,y) & y=x+z":
# 2626 states
def pcount2 "?msd_pisot4 Ex,y $psynch2(i,x) & $psynch2(i+n,y) & y=x+z":
# 2773 states
\end{verbatim}

We can now prove the following
result.
\begin{theorem}
The word $\bf p$ is
$2$-balanced.
\end{theorem}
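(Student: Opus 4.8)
The plan is to reduce the claim to a small number of first-order sentences that \texttt{Walnut} can decide, using the synchronized counting automata \texttt{pcount0}, \texttt{pcount1}, \texttt{pcount2} just constructed. Recall that, for a letter $a$, the automaton \texttt{pcount}$a$\texttt{(i,n,z)} accepts exactly when $z = |{\bf p}[i..i+n-1]|_a$, the number of occurrences of $a$ in the length-$n$ factor of $\bf p$ starting at position $i$. Since every factor of $\bf p$ of length $n$ is of the form ${\bf p}[i..i+n-1]$ for some $i \ge 0$, quantifying over all starting positions ranges over all factors, and two factors have the same length precisely when we feed the two counting automata the same value of $n$.

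By definition, $\bf p$ is $2$-balanced iff for all positions $i,j$, every length $n$, and every letter $a \in \{0,1,2\}$, the counts $z_1 = |{\bf p}[i..i+n-1]|_a$ and $z_2 = |{\bf p}[j..j+n-1]|_a$ satisfy $|z_1 - z_2| \le 2$. For each letter $a$ this is captured directly; for $a=0$ I would evaluate
\begin{verbatim}
eval bal0 "?msd_pisot4 Ai,j,n,z1,z2
  ($pcount0(i,n,z1) & $pcount0(j,n,z2)) => (z1<=z2+2 & z2<=z1+2)":
\end{verbatim}
and analogously \texttt{bal1} and \texttt{bal2} using \texttt{pcount1} and \texttt{pcount2}. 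If all three evaluate to \texttt{TRUE}, then no two equal-length factors differ by more than $2$ in any letter count, which is exactly the assertion that $\bf p$ is $2$-balanced.

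The main obstacle is computational rather than conceptual. Each automaton \texttt{pcount}$a$ already has between $1687$ and $2773$ states, and the sentence above couples two independent copies of such an automaton (reading $(i,n,z_1)$ and $(j,n,z_2)$ while sharing $n$), so the quantifier alternation forces the determinization of a large product. Keeping the intermediate automata within available memory is the delicate step, as the earlier Pisot-numeration computations in this paper (where naive constructions ballooned to tens of thousands of states) illustrate. Should the direct sentence prove too large, a safe fallback is to first build, for each letter, a \texttt{def} automaton recognizing the set of achievable differences $z_1 - z_2$ over all equal-length factor pairs, and then verify by inspection that only the values $-2,-1,0,1,2$ occur; this confines the expensive quantifier elimination to a single intermediate object and turns the final verification into a finite check.

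Finally, to confirm that the constant $2$ cannot be lowered (that is, $\bf p$ is \emph{not} $1$-balanced), I would additionally check the existential sentence asserting that some letter attains a difference of $2$ between two equal-length factors; a \texttt{TRUE} answer shows that the bound is sharp.
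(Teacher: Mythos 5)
Your proposal is correct and is essentially identical to the paper's own proof: the authors run exactly the same three universally quantified \texttt{Walnut} sentences (their \texttt{twobalanced0}, \texttt{twobalanced1}, \texttt{twobalanced2}) over the synchronized automata \texttt{pcount0}, \texttt{pcount1}, \texttt{pcount2}, and conclude from the three \texttt{TRUE} answers. Your added remarks on managing automaton size and on checking sharpness of the constant $2$ are sensible extras but not needed for the theorem as stated.
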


\begin{proof}
We use the following 
{\tt Walnut} commands.
\begin{verbatim}
def twobalanced0 "?msd_pisot4 Ai,j,n,x,y ($pcount0(i,n,x) & 
   $pcount0(j,n,y)) => (y<=x+2 & x<=y+2)":
def twobalanced1 "?msd_pisot4 Ai,j,n,x,y ($pcount1(i,n,x) & 
   $pcount1(j,n,y)) => (y<=x+2 & x<=y+2)":
def twobalanced2 "?msd_pisot4 Ai,j,n,x,y ($pcount2(i,n,x) & 
   $pcount2(j,n,y)) => (y<=x+2 & x<=y+2)":
\end{verbatim}
{\tt Walnut} returns
{\tt TRUE} for all of these.
\end{proof}

\section{Abelian complexity}

We can compute the abelian complexity of $\bf p$
with {\tt Walnut} in much the same way that it
was done for the Tribonacci word in \cite{Mousavi&Shallit:2015}, with some minor
modifications.   For each $n$ and $i \in \{0,1,2\}$,
we compute
the vector $u_i (n) = \min_x |x|_i$, where the minimum is over all the length-$n$
factors of $\bf p$.  
\begin{verbatim}
def min0 "?msd_pisot4 Ei $pcount0(i,n,x) & Aj,y $pcount0(j,n,y) => y>=x":
# 169 states

def min1 "?msd_pisot4 Ei $pcount1(i,n,x) & Aj,y $pcount1(j,n,y) => y>=x":
# 169 states

def min2 "?msd_pisot4 Ei $pcount2(i,n,x) & Aj,y $pcount2(j,n,y) => y>=x":
# 223 states
\end{verbatim}

Once we have this, we can show
that 
\begin{multline*}
\psi({\bf p}[i..i+n-1])-(u_0(n), u_1(n), u_2(n)) \in
\{ (0,0,1), (0,0,2), (0,1,0), (0,1,1), (0,1,2),\\
(0,2,0), (0,2,1), (1,0,0), (1,0,1), (1,0,2), (1,1,0), (1,1,1), (1,2,0), (2,0,0), (2,0,1), (2,1,0) \}
\end{multline*}
for $i \geq 0$ and $n \geq 1$, as follows:
\begin{verbatim}
def validtriples "?msd_pisot4 Ei,n,a,b,c $pcount0(i,n,a+x) & $min0(n,a) &    
   $pcount1(i,n,b+y) & $min1(n,b) & $pcount2(i,n,c+z) & $min2(n,c)":
\end{verbatim}

Next, we can show that
$$ \{ \psi({\bf p}[i..i+n-1])-(u_0(n), u_1(n), u_2(n))
\suchthat i \geq 0 \} $$
is one of the following 18 possible sets:
\begin{align*}
S_1 &= \{(0, 0, 1),  (0, 1, 0),  (1, 0, 0) \} \\    
S_2 &= \{ (0, 1, 1),  (1, 0, 1),  (1, 1, 0) \}\\                                             S_3 &= \{ (0, 1, 1),  (1, 0, 1),  (1, 1, 0),  (2, 0, 0) \} \\
S_4 &=
  \{(0, 0, 2),  (0, 1, 1),  (1, 0, 1),  (1, 1, 0) \}  \\                                S_5 &= \{ (0, 1, 1),  (0, 2, 0),  (1, 0, 1),  (1, 1, 0) \} \\                                  S_6 &= \{ (0, 0, 2),  (0, 1, 1),  (1, 0, 1),  (1, 1, 0),  (2, 0, 0) \} \\
S_7 &= \{
  (0, 1, 2),  (1, 0, 2),  (1, 1, 1),  (2, 0, 1),  (2, 1, 0)    \} \\                    S_8 &= \{ (0, 2, 1),  (1, 1, 1),  (1, 2, 0),  (2, 0, 1),  (2, 1, 0)  \} \\                       S_{9} &= \{  (0, 1, 1),  (0, 2, 0),  (1, 0, 1),  (1, 1, 0),  (2, 0, 0) \} \\                       
S_{10} &= \{ (0, 0, 2),  (0, 1, 1),  (0, 2, 0),  (1, 0, 1),  (1, 1, 0) \} \\                       S_{11} &= \{ (0, 1, 2),  (0, 2, 1),  (1, 0, 2),  (1, 1, 1),  (1, 2, 0) \} \\                       
S_{12} &= \{ (0, 1, 2),  (0, 2, 1),  (1, 1, 1),  (1, 2, 0),  (2, 0, 1),  (2, 1, 0) \} \\           
 S_{13} &= \{ (0, 1, 2),  (0, 2, 1),  (1, 0, 2),  (1, 1, 1),  (2, 0, 1),  (2, 1, 0) \} \\            
S_{14} &= \{  (0, 1, 2),  (0, 2, 1),  (1, 0, 2),  (1, 1, 1),  (1, 2, 0),  (2, 1, 0) \} \\            
S_{15} &= \{  (0, 1, 2),  (1, 0, 2),  (1, 1, 1),  (1, 2, 0),  (2, 0, 1),  (2, 1, 0) \} \\            
S_{16} &= \{  (0, 2, 1),  (1, 0, 2),  (1, 1, 1), (1, 2, 0),  (2, 0, 1),  (2, 1, 0) \} \\             
S_{17} &= \{  (0, 1, 2),  (0, 2, 1),  (1, 0, 2),  (1, 1, 1),  (1, 2, 0),  (2, 0, 1) \} \\            
S_{18} &= \{ (0, 1, 2),  (0, 2, 1),  (1, 0, 2),  (1, 1, 1),  (1, 2, 0),  (2, 0, 1),  (2, 1, 0) \} .
\end{align*}

\begin{theorem}
There is a $P4$-automaton of 144 states that,
on input a $P4$ representation of $n$, computes
the abelian complexity of $n$.
\end{theorem}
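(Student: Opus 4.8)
The plan is to exploit the two structural facts already established just above the theorem. First, the Parikh vector $\psi({\bf p}[i..i+n-1])$ of every length-$n$ factor differs from the minimum vector $(u_0(n),u_1(n),u_2(n))$ by one of the $16$ offset vectors listed via the {\tt validtriples} computation. Second, the set of offsets actually realized at a given length $n$,
$$\{\, \psi({\bf p}[i..i+n-1])-(u_0(n),u_1(n),u_2(n)) \suchthat i \geq 0 \,\},$$
is always one of the $18$ sets $S_1,\dots,S_{18}$. Since two equal-length factors are abelian equivalent exactly when their Parikh vectors agree, and offset vector is obtained from Parikh vector by subtracting the fixed vector $(u_0(n),u_1(n),u_2(n))$, the abelian complexity of $\bf p$ at length $n$ is precisely the number of distinct realized offsets, i.e.\ $|S_j|$ for whichever $S_j$ occurs. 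As the listed sets have cardinalities $3,4,5,6$, or $7$, the abelian complexity takes values only in $\{3,4,5,6,7\}$, and the whole task reduces to building a DFAO that, on input $(n)_P$, outputs the correct cardinality.

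First I would, for each of the $16$ candidate offset vectors $v=(p,q,r)$, define in {\tt Walnut} a unary predicate asserting that $v$ is realized by some length-$n$ factor:
\begin{verbatim}
def realpqr "?msd_pisot4 Ei,a,b,c ($min0(n,a) & $min1(n,b) & $min2(n,c)
   & $pcount0(i,n,a+p) & $pcount1(i,n,b+q) & $pcount2(i,n,c+r))":
\end{verbatim}
with the literal shifts $a+p$, $b+q$, $c+r$ written out for each fixed $v$ (so that the realized offset is forced to equal $v$, since {\tt min0,min1,min2} pin $a,b,c$ to the minimum counts). Each such automaton is a $P4$-automaton obtained by composing the already-verified synchronized automata {\tt pcount0,pcount1,pcount2} with {\tt min0,min1,min2}. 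By the classification above, the $16$-tuple of Boolean values $(\textrm{realized}_v(n))_v$ assumes only $18$ distinct patterns as $n$ ranges over $\mathbb{N}$.

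Then I would fold these $16$ indicator automata into a single DFAO whose output at each state is the number of coordinates $v$ for which $\textrm{realized}_v$ holds, that is, the cardinality of the realized offset set. Concretely, one takes the product of the $16$ automata, labels each reachable state by the popcount of its $16$-bit realized-vector, which by the classification is one of $3,4,5,6,7$, and minimizes. Because only the $18$ offset-patterns occur, the product collapses heavily under minimization, and {\tt Walnut} reports a minimized DFAO of exactly $144$ states, which establishes the claim.

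The main obstacle is computational rather than conceptual. The automata {\tt pcount0,pcount1,pcount2} already have $1687$, $2626$, and $2773$ states, and forming the $16$-fold product of the derived indicator automata risks a severe intermediate state blow-up before minimization, so the computation must be organized carefully (e.g.\ by grouping the offsets according to the sets $S_j$ and characterizing each class $\{n : \textrm{realized set}=S_j\}$ separately before combining). The decisive structural input, however, is the already-proved exhaustiveness of the list $S_1,\dots,S_{18}$: it guarantees that the popcount labeling is well-defined, that the output alphabet is $\{3,4,5,6,7\}$, and that the minimized automaton is small. With that verification in hand, the remaining work is a routine product-and-minimize run in {\tt Walnut}.
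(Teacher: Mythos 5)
Your proposal is correct and follows essentially the same route as the paper: the paper defines exactly your sixteen realized-offset predicates (its {\tt a001} through {\tt a210}, built from {\tt pcount0,pcount1,pcount2} and {\tt min0,min1,min2}), verifies via predicates {\tt num1}--{\tt num18} and a {\tt coverall} check that only the eighteen patterns $S_1,\dots,S_{18}$ occur, and then produces the final DFAO with Walnut's {\tt combine} command followed by a morphism image sending each pattern to its cardinality---which is just the concrete implementation of your ``group by $S_j$, then combine and label by popcount'' plan.
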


\begin{proof}
To create the automaton, use the following {\tt Walnut} code:
\begin{verbatim}
def a001 "?msd_pisot4 Ei,x,y,z $pcount0(i,n,x) & $min0(n,x) &
   $pcount1(i,n,y) & $min1(n,y)& $pcount2(i,n,z+1) & $min2(n,z)":
#6 states

def a002 "?msd_pisot4 Ei,x,y,z $pcount0(i,n,x) & $min0(n,x) &
   $pcount1(i,n,y) & $min1(n,y) & $pcount2(i,n,z+2) & $min2(n,z)":
#125 states

def a010 "?msd_pisot4 Ei,x,y,z $pcount0(i,n,x) & $min0(n,x) &
   $pcount1(i,n,y+1) & $min1(n,y) & $pcount2(i,n,z) & $min2(n,z)":
#6 states

def a011 "?msd_pisot4 Ei,x,y,z $pcount0(i,n,x) & $min0(n,x) &
   $pcount1(i,n,y+1) & $min1(n,y) & $pcount2(i,n,z+1) & $min2(n,z)":
# 132 states

def a012 "?msd_pisot4 Ei,x,y,z $pcount0(i,n,x) & $min0(n,x) &
   $pcount1(i,n,y+1) & $min1(n,y) & $pcount2(i,n,z+2) & $min2(n,z)":
# 129 states

def a020 "?msd_pisot4 Ei,x,y,z $pcount0(i,n,x) & $min0(n,x) &
   $pcount1(i,n,y+2) & $min1(n,y) & $pcount2(i,n,z) & $min2(n,z)":
#126 states

def a021 "?msd_pisot4 Ei,x,y,z $pcount0(i,n,x) & $min0(n,x) &
   $pcount1(i,n,y+2) & $min1(n,y) & $pcount2(i,n,z+1) & $min2(n,z)":
#131 states

def a100 "?msd_pisot4 Ei,x,y,z $pcount0(i,n,x+1) & $min0(n,x) &
   $pcount1(i,n,y) & $min1(n,y) & $pcount2(i,n,z) & $min2(n,z)":
# 6 states

def a101 "?msd_pisot4 Ei,x,y,z $pcount0(i,n,x+1) & $min0(n,x) &
   $pcount1(i,n,y) & $min1(n,y) & $pcount2(i,n,z+1) & $min2(n,z)":
# 132 states

def a102 "?msd_pisot4 Ei,x,y,z $pcount0(i,n,x+1) & $min0(n,x) &
   $pcount1(i,n,y) & $min1(n,y) & $pcount2(i,n,z+2) & $min2(n,z)":
# 127 states

def a110 "?msd_pisot4 Ei,x,y,z $pcount0(i,n,x+1) & $min0(n,x) &
   $pcount1(i,n,y+1) & $min1(n,y) & $pcount2(i,n,z) & $min2(n,z)":
# 132 states

def a111 "?msd_pisot4 Ei,x,y,z $pcount0(i,n,x+1) & $min0(n,x) &
   $pcount1(i,n,y+1) & $min1(n,y) & $pcount2(i,n,z+1) & $min2(n,z)":
# 131 states

def a120 "?msd_pisot4 Ei,x,y,z $pcount0(i,n,x+1) & $min0(n,x) &
   $pcount1(i,n,y+2) & $min1(n,y) & $pcount2(i,n,z) & $min2(n,z)":
# 134 states

def a200 "?msd_pisot4 Ei,x,y,z $pcount0(i,n,x+2) & $min0(n,x) &
   $pcount1(i,n,y) & $min1(n,y) & $pcount2(i,n,z) & $min2(n,z)":
# 110 states

def a201 "?msd_pisot4 Ei,x,y,z $pcount0(i,n,x+2) & $min0(n,x) &
   $pcount1(i,n,y) & $min1(n,y) & $pcount2(i,n,z+1) & $min2(n,z)":
# 131 states

def a210 "?msd_pisot4 Ei,x,y,z $pcount0(i,n,x+2) & $min0(n,x) &
   $pcount1(i,n,y+1) & $min1(n,y) & $pcount2(i,n,z) & $min2(n,z)":
# 127 states

def num1 "?msd_pisot4 $a001(n) & ~$a002(n) & $a010(n) & ~$a011(n) & ~$a012(n) &
   ~$a020(n) & ~$a021(n) & $a100(n) & ~$a101(n) & ~$a102(n) & ~$a110(n) & 
   ~$a111(n) & ~$a120(n) & ~$a200(n) & ~$a201(n) & ~$a210(n)":

def num2 "?msd_pisot4 ~$a001(n) & ~$a002(n) & ~$a010(n) & $a011(n) & ~$a012(n) &
   ~$a020(n) & ~$a021(n) & ~$a100(n) & $a101(n) & ~$a102(n) & $a110(n) & 
   ~$a111(n) & ~$a120(n) & ~$a200(n) & ~$a201(n) & ~$a210(n)":

def num3 "?msd_pisot4 ~$a001(n) & ~$a002(n) & ~$a010(n) & $a011(n) & ~$a012(n) &
   ~$a020(n) & ~$a021(n) & ~$a100(n) & $a101(n) & ~$a102(n) & $a110(n) & 
   ~$a111(n) & ~$a120(n) & $a200(n) & ~$a201(n) & ~$a210(n)":

def num4 "?msd_pisot4 ~$a001(n) & $a002(n) & ~$a010(n) & $a011(n) & ~$a012(n) &
   ~$a020(n) & ~$a021(n) & ~$a100(n) & $a101(n) & ~$a102(n) & $a110(n) & 
   ~$a111(n) & ~$a120(n) & ~$a200(n) & ~$a201(n) & ~$a210(n)":

def num5 "?msd_pisot4 ~$a001(n) & ~$a002(n) & ~$a010(n) & $a011(n) & ~$a012(n) &
   $a020(n) & ~$a021(n) & ~$a100(n) & $a101(n) & ~$a102(n) & $a110(n) & 
   ~$a111(n) & ~$a120(n) & ~$a200(n) & ~$a201(n) & ~$a210(n)":

def num6 "?msd_pisot4 ~$a001(n) & $a002(n) & ~$a010(n) & $a011(n) & ~$a012(n) &
   ~$a020(n) & ~$a021(n) & ~$a100(n) & $a101(n) & ~$a102(n) & $a110(n) & 
   ~$a111(n) & ~$a120(n) & $a200(n) & ~$a201(n) & ~$a210(n)":

def num7 "?msd_pisot4 ~$a001(n) & ~$a002(n) & ~$a010(n) & ~$a011(n) & $a012(n) &
   ~$a020(n) & ~$a021(n) & ~$a100(n) & ~$a101(n) & $a102(n) & ~$a110(n) & 
   $a111(n) & ~$a120(n) & ~$a200(n) & $a201(n) & $a210(n)":

def num8 "?msd_pisot4 ~$a001(n) & ~$a002(n) & ~$a010(n) & ~$a011(n) & ~$a012(n) &
   ~$a020(n) & $a021(n) & ~$a100(n) & ~$a101(n) & ~$a102(n) & ~$a110(n) & 
   $a111(n) & $a120(n) & ~$a200(n) & $a201(n) & $a210(n)":

def num9 "?msd_pisot4 ~$a001(n) & ~$a002(n) & ~$a010(n) & $a011(n) & ~$a012(n) &
   $a020(n) & ~$a021(n) & ~$a100(n) & $a101(n) & ~$a102(n) & $a110(n) & 
   ~$a111(n) & ~$a120(n) & $a200(n) & ~$a201(n) & ~$a210(n)":

def num10 "?msd_pisot4 ~$a001(n) & $a002(n) & ~$a010(n) & $a011(n) & ~$a012(n) &
   $a020(n) & ~$a021(n) & ~$a100(n) & $a101(n) & ~$a102(n) & $a110(n) & 
   ~$a111(n) & ~$a120(n) & ~$a200(n) & ~$a201(n) & ~$a210(n)":

def num11 "?msd_pisot4 ~$a001(n) & ~$a002(n) & ~$a010(n) & ~$a011(n) & $a012(n) &
   ~$a020(n) & $a021(n) & ~$a100(n) & ~$a101(n) & $a102(n) & ~$a110(n) & 
   $a111(n) & $a120(n) & ~$a200(n) & ~$a201(n) & ~$a210(n)":

def num12 "?msd_pisot4 ~$a001(n) & ~$a002(n) & ~$a010(n) & ~$a011(n) & $a012(n) &
   ~$a020(n) & $a021(n) & ~$a100(n) & ~$a101(n) & ~$a102(n) & ~$a110(n) & 
   $a111(n) & $a120(n) & ~$a200(n) & $a201(n) & $a210(n)":

def num13 "?msd_pisot4 ~$a001(n) & ~$a002(n) & ~$a010(n) & ~$a011(n) & $a012(n) &
   ~$a020(n) & $a021(n) & ~$a100(n) & ~$a101(n) & $a102(n) & ~$a110(n) & 
   $a111(n) & ~$a120(n) & ~$a200(n) & $a201(n) & $a210(n)":

def num14 "?msd_pisot4 ~$a001(n) & ~$a002(n) & ~$a010(n) & ~$a011(n) & $a012(n) &
   ~$a020(n) & $a021(n) & ~$a100(n) & ~$a101(n) & $a102(n) & ~$a110(n) & 
   $a111(n) & $a120(n) & ~$a200(n) & ~$a201(n) & $a210(n)":

def num15 "?msd_pisot4 ~$a001(n) & ~$a002(n) & ~$a010(n) & ~$a011(n) & $a012(n) &
   ~$a020(n) & ~$a021(n) & ~$a100(n) & ~$a101(n) & $a102(n) & ~$a110(n) & 
   $a111(n) & $a120(n) & ~$a200(n) & $a201(n) & $a210(n)":

def num16 "?msd_pisot4 ~$a001(n) & ~$a002(n) & ~$a010(n) & ~$a011(n) & ~$a012(n) &
   ~$a020(n) & $a021(n) & ~$a100(n) & ~$a101(n) & $a102(n) & ~$a110(n) & 
   $a111(n) & $a120(n) & ~$a200(n) & $a201(n) & $a210(n)":

def num17 "?msd_pisot4 ~$a001(n) & ~$a002(n) & ~$a010(n) & ~$a011(n) & $a012(n) &
   ~$a020(n) & $a021(n) & ~$a100(n) & ~$a101(n) & $a102(n) & ~$a110(n) & 
   $a111(n) & $a120(n) & ~$a200(n) & $a201(n) & ~$a210(n)":

def num18 "?msd_pisot4 ~$a001(n) & ~$a002(n) & ~$a010(n) & ~$a011(n) & $a012(n) &
   ~$a020(n) & $a021(n) & ~$a100(n) & ~$a101(n) & $a102(n) & ~$a110(n) & 
   $a111(n) & $a120(n) & ~$a200(n) & $a201(n) & $a210(n)":

eval coverall "?msd_pisot4 An (n>=1) => ($num1(n)|$num2(n)|$num3(n)|$num4(n)|$num5(n)|
   $num6(n)|$num7(n)|$num8(n)|$num9(n)|$num10(n)|$num11(n)|$num12(n)|
   $num13(n)|$num14(n)|$num15(n)|$num16(n)|$num17(n)|$num18(n))":

combine pab num1 num2 num3 num4 num5 num6 num7 num8 num9 num10 num11 num12 
   num13 num14 num15 num16 num17 num18:

morphism abc "0->1 1->3 2->3 3->4 4->4 5->4 6->5 7->5 8->5 9->5 10->5 11->5 
   12->6 13->6 14->6 15->6 16->6 17->6 18->7":

image BC3 abc pab:
\end{verbatim}
\end{proof}

\begin{corollary}
The abelian complexity of $\bf p$, for $n \geq 1$,
lies in $\{3,4,5,6,7 \}$.  Each possibility occurs
infinitely often.
\end{corollary}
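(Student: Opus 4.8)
The plan is to obtain both assertions by directly inspecting the DFAO \texttt{BC3} built in the theorem above, which reads the $P4$ representation of $n$ and outputs the abelian complexity of $\bf p$ at $n$.

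For the range, I would argue that the output alphabet of \texttt{BC3} is exactly the image of the coding \texttt{abc}, namely $\{1,3,4,5,6,7\}$. The value attached to state $j$ is precisely the cardinality $|S_j|$: a count gives $|S_1|=|S_2|=3$, $|S_3|=|S_4|=|S_5|=4$, $|S_6|=\cdots=|S_{11}|=5$, $|S_{12}|=\cdots=|S_{17}|=6$, and $|S_{18}|=7$, which matches the rules $1\!\to\!3,\,2\!\to\!3,\,3\!\to\!4,\dots,18\!\to\!7$ in \texttt{abc}, while the remaining symbol $1$ arises only from the empty-input (length-$0$) case. The query \texttt{coverall} certifies that for every $n\ge 1$ at least one of $\texttt{num1},\dots,\texttt{num18}$ holds, and since the defining conditions of the $\texttt{num}_j$ are pairwise incompatible (each prescribes, for all sixteen candidate difference vectors, exactly which are present and which are absent), precisely one holds. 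Hence for $n\ge 1$ the combined automaton \texttt{pab} outputs a value in $\{1,\dots,18\}$, so $\texttt{BC3}[n]\in\{3,4,5,6,7\}$. It then remains only to confirm with a one-line Walnut check that $\texttt{BC3}[n]\ne 1$ for all $n\ge 1$.

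For the claim that each value occurs infinitely often, I would verify, for each $v\in\{3,4,5,6,7\}$, that the (regular) language $\{n:\texttt{BC3}[n]=v\}$ is infinite, which for a regular language is equivalent to its containing arbitrarily large elements. In Walnut this is
\begin{verbatim}
eval inf3 "?msd_pisot4 Am En (n>m) & BC3[n]=@3":
\end{verbatim}
together with the analogous queries \texttt{inf4}, \texttt{inf5}, \texttt{inf6}, and \texttt{inf7}; each is expected to return \texttt{TRUE}.

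The step needing the most care is the exhaustiveness-and-exclusivity bookkeeping behind the range claim: one must be sure that the eighteen sets $S_j$ really do cover every $n\ge 1$ and that no $n$ matches two of the $\texttt{num}_j$ simultaneously. This is exactly what \texttt{coverall} (coverage) and the syntactic form of the $\texttt{num}_j$ (exclusivity) supply, so the difficulty is not mathematical but rather a matter of certifying that \texttt{BC3} faithfully encodes the case distinction established in the theorem; everything else is read off directly from the finite automaton.
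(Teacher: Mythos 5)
Your proposal is correct and takes essentially the same route as the paper, which offers no separate argument for the corollary but simply reads it off the automaton \texttt{BC3} built in the theorem: the coding \texttt{abc} assigns to each case $j$ the cardinality $|S_j|\in\{3,4,5,6,7\}$, coverage for $n\ge 1$ is exactly what \texttt{coverall} certifies, exclusivity follows because each \texttt{num}$_j$ fixes a complete and distinct sign pattern on the sixteen predicates, and the ``infinitely often'' claim is settled by inspecting (or querying, as in your \texttt{inf3}--\texttt{inf7}) the resulting automaton. Your write-up merely makes explicit the bookkeeping the paper leaves implicit.
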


\section{Palindromes}

\begin{theorem}
The only palindromes occurring in $\bf p$ are
$$\{ 0,  1,  2,  121,  101,  010,  01210,  21012,  1012101 \}.$$
\end{theorem}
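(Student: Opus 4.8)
The plan is to reuse the automaton-theoretic machinery from the previous sections, now applied to the palindrome property. Since $\bf p$ is automatic in the numeration system $P4$ and we have the DFAO \texttt{PI} computing ${\bf p}[n]$, I would first build an automaton recognizing those pairs $(i,n)$ for which the factor ${\bf p}[i..i+n-1]$ is a palindrome. The cleanest way to encode the palindrome condition while avoiding subtraction is to demand that symmetric positions agree, i.e.\ that ${\bf p}[i+s] = {\bf p}[i+t]$ whenever $s+t+1=n$ (so that $s,t<n$ automatically). In Walnut:
\begin{verbatim}
def ppal "?msd_pisot4 (n>=1) & As,t (s+t+1=n) => PI[i+s]=PI[i+t]":
\end{verbatim}

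Next I would project onto the length coordinate to obtain an automaton for the set of lengths at which a palindrome occurs, and then verify the key \emph{finiteness} claim, namely that palindromes occur only for $n\in\{1,3,5,7\}$:
\begin{verbatim}
def ppallen "?msd_pisot4 Ei $ppal(i,n)":
eval checkpal "?msd_pisot4 An $ppallen(n) => (n=1|n=3|n=5|n=7)":
\end{verbatim}
I expect \texttt{checkpal} to return \texttt{TRUE}. This finiteness is plausible precisely because, as noted in the introduction, the factor set of $\bf p$ is not closed under reversal (it contains $02$ but avoids $20$), so one does not expect long palindromes.

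Once the lengths are bounded by $7$, only finitely many factors remain to inspect: by Theorem~\ref{subword} there are exactly $2n+1$ factors of each length $n$, so every candidate palindrome lies among the $3+7+11+15$ factors of lengths $1,3,5,7$. I would enumerate these by reading them off a prefix of $\bf p$ long enough to realize every factor of length at most $7$; such a prefix exists because $h$ is primitive (already $h^3(a)$ contains every letter, for each $a$), so $\bf p$ is uniformly recurrent. Selecting from this finite list the words that read the same forwards and backwards yields exactly the nine words in the statement.

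The step I expect to be the main obstacle is the very first one, the construction of \texttt{ppal}. The analogous ``\texttt{pisotfaceq}'' construction in Section~3 failed because determinizing the underlying automaton blew up to tens of thousands of states and exhausted the available storage, and the same blow-up may well recur here, since \texttt{ppal} again involves the DFAO under a universally quantified positional constraint. If the naive command fails, the remedy is the guess-and-verify strategy already used to produce \texttt{pisi.txt}: guess the minimal palindrome automaton via the Myhill--Nerode theorem and confirm it inductively in Walnut. The natural induction increases the length by $2$, peeling off the outer pair of letters, via a check of the shape ``$(i{+}1,n)$ is a palindrome and ${\bf p}[i]={\bf p}[i+n+1]$ iff $(i,n{+}2)$ is a palindrome,'' together with the base cases $n=0$ and $n=1$. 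Everything after this construction is routine inspection of small automata and a finite enumeration.
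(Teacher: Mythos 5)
Your proposal is correct in its logic, but it takes a much heavier route than the paper. The paper's proof is two sentences: since deleting the first and last letter of a palindrome leaves a palindrome, any palindrome of length $\geq 10$ would contain a palindrome of length $8$ or $9$ as its central factor; so it suffices to list \emph{all} factors of length $\leq 9$ (an enumeration guaranteed complete by Theorem~\ref{subword}, since there are exactly $2n+1$ of each length) and observe that the palindromes among them are precisely the nine listed words, with none of length $8$ or $9$. No palindrome automaton, no projection, no new Walnut computation is needed. Your plan instead manufactures the length bound by building \texttt{ppal}, projecting to \texttt{ppallen}, and checking membership in $\{1,3,5,7\}$ --- which works if the computation goes through, but exposes you to exactly the determinization blow-up you flag, and forces the guess-and-verify fallback. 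The ironic point is that the inductive step of your fallback (``$(i,n{+}2)$ is a palindrome iff $(i{+}1,n)$ is a palindrome and the outer letters agree'') \emph{is} the combinatorial fact that makes the elementary argument work: once you have it, bounding palindrome lengths reduces to checking two consecutive lengths with no palindromes, i.e., to the same finite enumeration you perform at the end, and the automaton becomes superfluous. What your approach buys in exchange is genuinely more information --- an automaton recognizing all palindrome occurrences $(i,n)$, in the style of the Tribonacci analysis --- but for this theorem that is overkill. One small repair either way: your final enumeration should terminate by counting (stop once $2n+1$ distinct factors of length $n$ have appeared in the prefix), which is effective, rather than appealing only to uniform recurrence, which gives existence of a sufficient prefix but not an explicit bound.
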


\begin{proof}
It suffices to list all the factors of
length $\leq 9$, since any longer palindrome would have these as factors.  We can be sure we have examined all of them, by Theorem~\ref{subword}.
\end{proof}

\section{Bispecial factors}

We say a factor $w$ of an infinite word $\bf x$
is {\it right-special} (resp., {\it left-special}) if there exist two distinct
letters $a, b$ such that $wa$ and $wb$ (resp., $aw$ and $bw$) are both factors of $\bf x$.   A word $w$ is {\it bispecial} if it is both right- and left-special.

We can determine the lengths of bispecial factors
occurring in $\bf p$.
\begin{verbatim}
def pisotrightspec "?msd_pisot4 Ej $pisi(i,j,n) & PI[i+n]!=PI[j+n]":
def pisotleftspec "?msd_pisot4 Ej $pisi(i,j,n) & PI[i-1]!=PI[j-1]":
def pisotbispec "?msd_pisot4 $pisotrightspec(i,n) & $pisotleftspec(i,n)":
def pisotbispeclen "?msd_pisot4 Ei $pisotbispec(i,n)":
\end{verbatim}
The result is displayed in Figure~\ref{bispec}.
\begin{figure}[H]
\begin{center}
\includegraphics[width=6in]{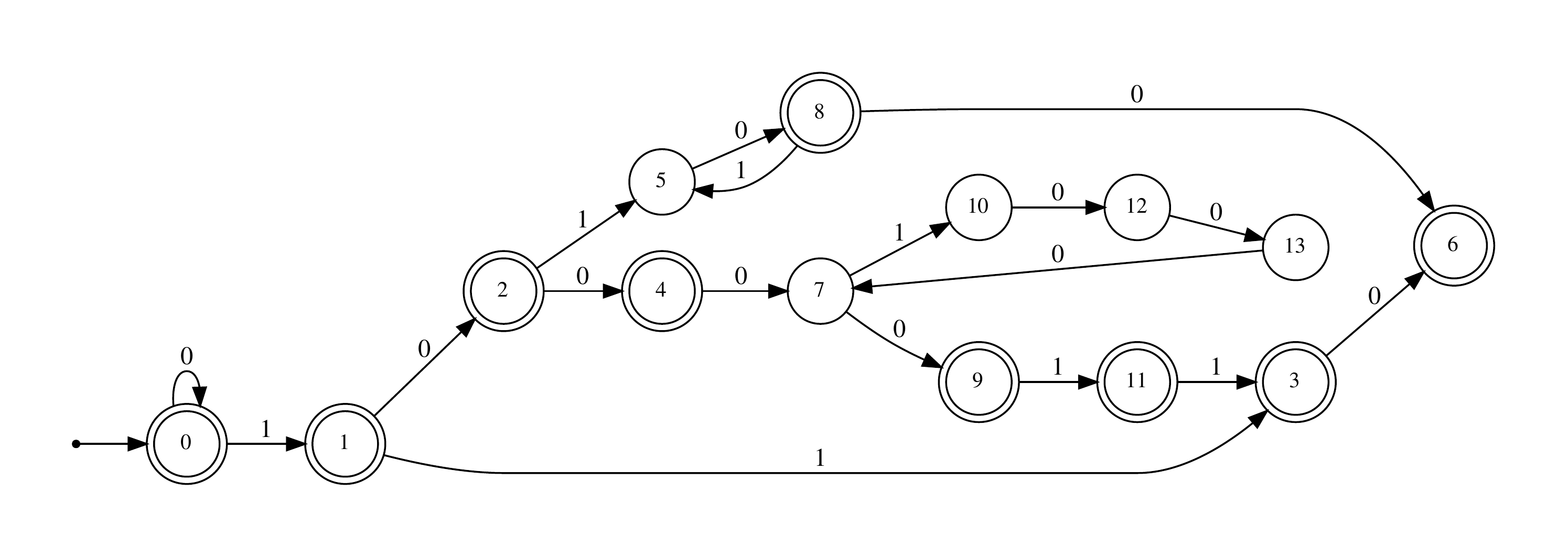}
\end{center}
\caption{Automaton recognizing lengths of bispecial factors in $\bf p$.}
\label{bispec}
\end{figure}
\begin{theorem}
The word $\bf p$ has a bispecial factor of length $n$ iff $$(n)_P \in \{1, 11, 110 \} \, \cup\, (10)^+ \{\epsilon, 0 \} \, \cup \, (1000)^+ \{ 0, 01, 011, 0110 \} .$$
\end{theorem}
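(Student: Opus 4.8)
The plan is to follow the same Walnut-based strategy used in the preceding sections: encode ``bispecial'' as a first-order predicate built from the already-verified machinery, let the theorem-prover compute the resulting automaton on lengths, and then read the claimed regular expression off that automaton. The key input is the automaton \texttt{pisi}, verified correct in Section~\ref{subword}, which decides the predicate ${\bf p}[i..i+n-1] = {\bf p}[j..j+n-1]$; everything else is assembled from it.

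First I would express right- and left-speciality in terms of \texttt{pisi}. A factor $w = {\bf p}[i..i+n-1]$ is right-special exactly when some other occurrence of the same length-$n$ block, at a position $j$, is followed by a different letter, i.e.\ $\texttt{pisotrightspec}(i,n) := \exists j\,(\texttt{pisi}(i,j,n) \wedge {\bf p}[i+n] \neq {\bf p}[j+n])$; symmetrically, $w$ is left-special when some occurrence is preceded by a different letter, giving $\texttt{pisotleftspec}(i,n)$. Conjoining the two yields $\texttt{pisotbispec}(i,n)$, the statement that the particular occurrence at position $i$ is bispecial. Because \texttt{pisi} is known to decide factor equality, correctness of these encodings is immediate.

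Next I would pass from occurrences to lengths: a length $n$ is realized by a bispecial factor iff \emph{some} position carries one, that is $\exists i\,\texttt{pisotbispec}(i,n)$, which is the predicate \texttt{pisotbispeclen}. Feeding this to Walnut produces a deterministic automaton over the $P4$ alphabet accepting exactly those $(n)_P$ for which ${\bf p}$ has a bispecial factor of length $n$; this is the automaton of Figure~\ref{bispec}.

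The final and only nonroutine step is to confirm that the language recognized by this automaton equals the stated set $\{1,11,110\} \cup (10)^+\{\epsilon,0\} \cup (1000)^+\{0,01,011,0110\}$. I would verify this by the standard finite check, tracing the accepting paths of Figure~\ref{bispec} and matching its cycles and tails against the three pieces of the expression (or, equivalently, forming the symmetric difference of the two languages in Walnut and checking it is empty). The main obstacle is precisely this translation and the accompanying bookkeeping at the boundary: the left-special predicate references ${\bf p}[i-1]$, so one must ensure the $i=0$ case contributes no spurious or missing lengths, and one must check that the short initial words $1,11,110$ and the two eventually-periodic families generated by $(10)^+$ and $(1000)^+$ are each captured exactly, without overlap or omission.
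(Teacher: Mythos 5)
Your proposal is correct and follows essentially the same route as the paper: the paper builds exactly the predicates you describe (\texttt{pisotrightspec}, \texttt{pisotleftspec}, \texttt{pisotbispec}, \texttt{pisotbispeclen}) from the verified factor-equality automaton \texttt{pisi}, runs them through {\tt Walnut}, and reads the stated regular expression off the resulting automaton in Figure~\ref{bispec}. Your added care about the $i=0$ boundary in the left-special predicate and about checking the automaton against the regular expression piece by piece is reasonable bookkeeping, but it does not change the method.
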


\section{Letter density}

In this section we obtain the letter densities
of $\bf p$.  We know from 
\cite{Cassaigne&Labbe&Leroy:2017} that the densities exist, and therefore it suffices to determine
them on prefixes of the form $h^n(0)$.  

\begin{theorem}
\leavevmode
\begin{itemize}
    \item The density of $0$ is $1/\beta_1^2 \doteq 0.324717957244746$;
    \item The density of $1$ is $1/\beta_1^2 + 1/\beta_1^4 \doteq 0.430159709001946734$;
    \item The density of $2$ is $1/\beta_1^3 + 1/\beta_1^5 \doteq 0.2451223337533$.
\end{itemize}
\end{theorem}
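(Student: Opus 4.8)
The plan is to reduce the computation to the Perron--Frobenius eigenvector of the incidence matrix of $h$. Writing Parikh vectors $\psi$ as column vectors indexed by $0,1,2$, the morphism $h\colon 0\to01,\ 1\to21,\ 2\to0$ has incidence matrix
\[
M=\begin{pmatrix}1&0&1\\[2pt]1&1&0\\[2pt]0&1&0\end{pmatrix},
\]
where the $j$th column records the number of occurrences of each letter in $h(j)$. Since $\psi(h(w))=M\psi(w)$ for every factor $w$, iterating gives $\psi(h^n(0))=M^n\mathbf e_0$ with $\mathbf e_0=(1,0,0)^{\mathsf T}$. A direct expansion shows that the characteristic polynomial of $M$ is $-(X^3-2X^2+X-1)$, so the eigenvalues of $M$ are exactly the zeros of $X^3-2X^2+X-1$; the dominant one is the Pisot number $\beta_1$, and the other two satisfy $|\beta_2|=|\beta_3|=\beta_1-1<1$, as already recorded in the excerpt.

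Because the densities are known to exist by the cited result of \cite{Cassaigne&Labbe&Leroy:2017}, it suffices to identify the limit $\mathbf d=\lim_{n\to\infty}\psi(h^n(0))/|h^n(0)|$. I would argue that $\mathbf d$ is an eigenvector of $M$: from $\psi(h^{n+1}(0))=M\psi(h^n(0))$ together with the existence of $\mathbf d$ one gets that $\lambda:=\lim_n |h^{n+1}(0)|/|h^n(0)|=\mathbf 1^{\mathsf T}M\mathbf d$ exists and that $M\mathbf d=\lambda\mathbf d$; since $\mathbf d$ is nonnegative and nonzero, Perron--Frobenius (the morphism is primitive, as $h^3$ sends every letter to a word containing all of $0,1,2$) forces $\lambda=\beta_1$. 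Solving $(M-\beta_1 I)\mathbf d=0$ yields a one-dimensional eigenspace spanned by $(\beta_1^2-\beta_1,\ \beta_1,\ 1)^{\mathsf T}$, the consistency of the three equations being precisely the identity $\beta_1(\beta_1-1)^2=1$, equivalent to $\beta_1^3=2\beta_1^2-\beta_1+1$. Normalizing so that the entries sum to $1$, and noting that their sum is $\beta_1^2+1$, I obtain densities $(\beta_1^2-\beta_1)/(\beta_1^2+1)$, $\beta_1/(\beta_1^2+1)$, and $1/(\beta_1^2+1)$ for $0,1,2$ respectively.

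It then remains only to rewrite these three rational functions of $\beta_1$ in the stated form, and this is the one genuinely computational step, though a routine one. Using $\beta_1^3=2\beta_1^2-\beta_1+1$ repeatedly, one checks the two auxiliary identities $\beta_1^4-\beta_1^3=\beta_1^2+1$ and $\beta_1^5=(\beta_1^2+1)^2$; from these,
\[
\frac{\beta_1^2-\beta_1}{\beta_1^2+1}=\frac1{\beta_1^2},\qquad
\frac{\beta_1}{\beta_1^2+1}=\frac{\beta_1^2+1}{\beta_1^4}=\frac1{\beta_1^2}+\frac1{\beta_1^4},\qquad
\frac{1}{\beta_1^2+1}=\frac{\beta_1^2+1}{\beta_1^5}=\frac1{\beta_1^3}+\frac1{\beta_1^5},
\]
which are exactly the asserted values. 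The only conceptual point requiring care is the identification of $\mathbf d$ with the normalized Perron eigenvector; granting the cited existence of densities this is immediate, so I anticipate no serious obstacle, the bulk of the work being the elementary algebra above.
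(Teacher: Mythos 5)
Your proof is correct, and it takes a genuinely different route from the paper's. The paper's proof is a one-liner: an induction showing $\psi(h^n(0)) = (X_{n-1},\, X_{n-1}+X_{n-3},\, X_{n-2}+X_{n-4})$ in terms of the numeration-system sequence $X_n$, after which the densities follow from the asymptotics $X_n \sim \alpha_1 \beta_1^n$ already established in the critical-exponent argument. You instead bypass any explicit formula for the Parikh vectors: you use the cited existence of frequencies to get convergence of $\psi(h^n(0))/|h^n(0)|$, show the limit must be a nonnegative eigenvector of the incidence matrix $M$, invoke primitivity (correctly verified via $h^3$) to pin the eigenvalue to $\beta_1$, and then solve $(M-\beta_1 I)\mathbf{d}=0$ and normalize. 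All your linear algebra and the algebraic identities check out: the characteristic polynomial is $-(X^3-2X^2+X-1)$, the eigenvector $(\beta_1^2-\beta_1,\beta_1,1)^{\mathsf T}$ is right, and the reductions $\beta_1^4-\beta_1^3=\beta_1^2+1$ and $\beta_1^5=(\beta_1^2+1)^2$ do convert your normalized entries into the stated forms $1/\beta_1^2$, $1/\beta_1^2+1/\beta_1^4$, $1/\beta_1^3+1/\beta_1^5$. (A small simplification available to you: since the other two eigenvalues of $M$ are non-real, the real number $\lambda$ is forced to equal $\beta_1$ without appealing to Perron--Frobenius at all.) What each approach buys: the paper's induction yields exact letter counts on the prefixes $h^n(0)$, tying the densities directly to the $P4$ numeration system and reusing constants computed earlier in the paper; your argument is more generic --- it works verbatim for any primitive morphism and needs nothing beyond the incidence matrix and the defining cubic, at the cost of requiring the Perron-vector identification step that the paper's explicit formula makes unnecessary.
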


\begin{proof}
An easy induction gives
$\psi(|h^n(0)|) = (X_{n-1}, X_{n-1}+X_{n-3}, X_{n-2}+X_{n-4})$,
from which the desired result follows immediately.
\end{proof}

\begin{remark}
These densities were also given in
\cite{Cassaigne&Labbe&Leroy:2022} for a slightly 
different morphism with the same incidence matrix.
\end{remark}

\section{Recurrence and appearance}

Let $\bf x$ be an infinite word.  The recurrence
function $R(n)$ is defined to be the smallest
positive integer $m$ such that every occurrence
of a length-$n$ factor $x$ is followed by another occurrence of the same word at distance at most
$R(n)$.  We now compute it for the word $\bf p$.

\begin{theorem}
Define the sequences $(B_i)$ and $(C_i)$ as follows:
\begin{align*}
B_{2i+3} &= [1(01)^i 10]_P  \\
B_{2i+4} &= [1(01)^i 100]_P \\
C_{4i+2} &= [10(0010)^i]_P \\
C_{4i+3} &= [10(0010)^i 0]_P \\
C_{4i+4} &= [10(0010)^i 00]_P \\
C_{4i+5} &= [10(0010)^i 001]_P
\end{align*}
for $i \geq 0$.
Then 
$$R(n) = \begin{cases}
5, & \text{if $n = 1$;} \\
12, & \text{if $n=2$; } \\
16, & \text{if $n=3$; } \\
21, & \text{if $n=4$; } \\
28, & \text{if $n=5$; } \\
X_{i+4}, & \text{if 
$B_i \leq n \leq C_{i+1}$ for $i \geq 3$;}\\
X_{i+4}+X_{i+2}, & \text{if 
$C_{i+1} < n < B_{i+1}$ for $i \geq 3$.}
\end{cases}
$$
\label{recur}
\end{theorem}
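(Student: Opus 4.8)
The plan is to stay within the Walnut framework of the paper: phrase $R(n)$ via a first-order formula over the $P4$ numeration system, synthesize an automaton for the graph $\{(n, R(n))\}$, and then certify that this automaton realizes the claimed piecewise expression. The essential prerequisite is already in hand, namely the verified factor-equality automaton $\mathtt{pisi}$, recognizing triples $(i,j,n)$ with $\mathbf{p}[i..i+n-1] = \mathbf{p}[j..j+n-1]$, together with the fact that $h$ is primitive, so that $\mathbf{p}$ is uniformly recurrent and $R(n)$ is finite for every $n$.

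First I would express that $m$ is a valid recurrence bound at length $n$ — that is, that every length-$n$ factor reoccurs within distance $m$ of any of its occurrences:
$$\mathtt{recbound}(n,m):\quad \forall i\ \exists d\ \bigl(1\le d\le m \ \wedge\ \mathtt{pisi}(i,\,i+d,\,n)\bigr).$$
By the definition in the statement, $R(n)$ is the least $m$ with $\mathtt{recbound}(n,m)$, so the graph of $R$ is cut out by
$$\mathtt{rec}(n,r):\quad \mathtt{recbound}(n,r)\ \wedge\ \neg\,\mathtt{recbound}(n,r-1),$$
and Walnut will return a synchronized automaton reading $n$ and $r$ in parallel. (That this automaton is total in $n$ re-confirms uniform recurrence, so no separate finiteness argument is needed.)

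Next I would verify the closed form. The decisive point is that every ingredient of the right-hand side is $P4$-regular: the interval endpoints $B_i$ and $C_i$ are given by the stated regular patterns, and the two output values have the clean representations $(X_{i+4})_P = 1\,0^{\,i+3}$ and $(X_{i+4}+X_{i+2})_P = 101\,0^{\,i+1}$ (both valid, since neither contains $111$ nor $1101$, and both of length $i+4$). I would therefore inspect the automaton for $\mathtt{rec}$ to read off the formula, then assemble a predicate $\mathtt{claimed}(n,r)$ encoding the right-hand side — treating $n\le 5$ as finitely many explicit pairs and building small automata for the ranges $[B_i,C_{i+1}]$ and $(C_{i+1},B_{i+1})$ — and discharge the theorem by checking $\forall n,r\ \bigl(\mathtt{rec}(n,r)\Leftrightarrow \mathtt{claimed}(n,r)\bigr)$ in Walnut.

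The main obstacle will be correctly coupling the scale index $i$, which simultaneously controls the interval containing $n$ (through the $P4$-length and trailing-block structure of $(n)_P$, since $B_i$ and $C_i$ each have length $i$) and the corresponding output, whose $P4$-length is $i+4$. Because each output value is a fixed prefix ($1$ or $101$) followed by a run of zeros whose length tracks $i$, this coupling is realizable by a single synchronized automaton, and the parity split between $B_{2i+3}$ and $B_{2i+4}$ together with the four residues of $C$ modulo $4$ is absorbed into its states. The genuinely delicate part is lining up the closed/open interval conventions at the boundaries $n=B_i$ and $n=C_{i+1}$ exactly, and matching the separately listed small values $n=1,\dots,5$ to the point where the uniform pattern begins at $B_3=6$; once the equivalence check returns \texttt{TRUE}, the formula is established.
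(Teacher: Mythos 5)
Your proposal is correct and follows essentially the same route as the paper: your \texttt{recbound} and \texttt{rec} predicates are literally the paper's \texttt{poccur} and \texttt{precur} (built on the verified \texttt{pisi} automaton), and the remaining step of certifying the closed form against the synthesized synchronized automaton matches the paper's strategy, which does this by extracting the boundary sequences $B_i$, $C_i$ via auxiliary predicates rather than by your equivalence check against a \texttt{claimed} automaton. This difference is an implementation detail within the same Walnut-based method, so no further comparison is needed.
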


\begin{proof}
We use the following {\tt Walnut} code to compute
$R$. 
\begin{verbatim}
def poccur "?msd_pisot4 Ai Ej j>i & j<=i+x & $pisi(i,j,n)":
def precur "?msd_pisot4 $poccur(n,x) & ~$poccur(n,x-1)":
reg pix msd_pisot4 "0*10*": 
def precB "?msd_pisot4 Ex $pix(x) & $precur(n,x) & ~$precur(n-1,x)":
def precC "?msd_pisot4 Ex $pix(x) & $precur(n,x) & ~$precur(n+1,x)":
\end{verbatim}
\end{proof}

\begin{corollary}
We have $R(n) \leq \xi n$ for
$\xi = (2\beta_1^2-\beta_1 + 2) \doteq 6.40431358$.
\end{corollary}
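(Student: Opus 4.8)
The plan is to derive the bound directly from the exact formula for $R(n)$ in Theorem~\ref{recur}, using the spectral decomposition of $X_n$ introduced in the proof of Theorem~\ref{gamma}. First I would note that on every interval on which $R$ is constant the ratio $R(n)/n$ is decreasing, so its maximum over that interval occurs at the left endpoint. By Theorem~\ref{recur}, for $n \ge 6$ these intervals are exactly $[B_i, C_{i+1}]$, where $R = X_{i+4}$, and $[C_{i+1}+1, B_{i+1}-1]$, where $R = X_{i+4}+X_{i+2}$, for $i \ge 3$. Hence it suffices to establish the two families of inequalities $X_{i+4} \le \xi B_i$ and $X_{i+4}+X_{i+2} \le \xi(C_{i+1}+1)$ for $i \ge 3$, together with the finitely many values $n \le 5$, for which $R(n) \in \{5,12,16,21,28\}$ and the bound is immediate.

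Second, I would turn the base-$P4$ expressions for $B_i$ and $C_i$ into short closed forms. Expanding, for example, $B_{2k+3} = X_{2k+3} + \sum_{1 \le j \le k} X_{2j+1} + X_2$ and applying the summation identity $\sum_{1 \le i \le n} X_{2i} = (3X_{2n}-X_{2n+1}+2X_{2n+2}-6)/5$ from the proof of Theorem~\ref{gamma}, together with its odd-index analogue $\sum_{1 \le j \le k} X_{2j+1} = (3X_{2k+1}-X_{2k+2}+2X_{2k+3}-9)/5$, each $B_i$ and each $C_i$ becomes a fixed integer combination of boundedly many consecutive terms $X_j$ plus a constant; for instance $5B_{2k+3} = 7X_{2k+3}-X_{2k+2}+3X_{2k+1}+1$ and $5B_{2k+4} = 7X_{2k+4}-X_{2k+3}+3X_{2k+2}+4$, with analogous expressions in the four residue classes defining the $C_i$.

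Third, I would substitute $X_n = \alpha_1\beta_1^n + \epsilon_n$, where $\epsilon_n = \alpha_2\beta_2^n + \alpha_3\beta_3^n$ satisfies $|\epsilon_n| < (\beta_1-1)^n$ as shown in the proof of Theorem~\ref{gamma}. The crucial feature, which also explains the value of $\xi$, is that $\xi$ is the limiting ratio $\beta_1^4-\beta_1^2$, equivalently $\xi = 2\beta_1^2-\beta_1+2$ via $\beta_1^3 = 2\beta_1^2-\beta_1+1$. Consequently the $\beta_1$-power terms cancel identically: one verifies $5\beta_1^6 = \xi(7\beta_1^2-\beta_1+3) = 45\beta_1^2-10\beta_1+25$, so that $\xi \cdot 5B_{2k+3} - 5X_{2k+7}$ collapses to $\xi + \xi(7\epsilon_{2k+3}-\epsilon_{2k+2}+3\epsilon_{2k+1}) - 5\epsilon_{2k+7}$. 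The surviving leading part is a strictly positive constant (here $\xi$, coming from the constant term of the closed form), while the remainder is $O((\beta_1-1)^{i})$ and tends to $0$. Thus each inequality holds for all sufficiently large $i$, and the remaining small values of $i$ are settled by direct computation.

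The main obstacle is exactly this cancellation. Because $R(n)/n \to \xi$, the leading $\beta_1$-terms must vanish identically in every residue class, so the sign of each inequality is governed entirely by lower-order data: the constant that survives the cancellation and an exponentially small oscillating remainder. The real work therefore consists of (i) producing the correct closed forms for $B_i$ and $C_i$ in all relevant residue classes, (ii) checking, using the minimal polynomial of $\beta_1$, that the leftover constant is strictly positive in each case (it equals $\xi$ and $4\xi$ in the two type-$A$ cases above, and the $+1$ shift in $C_{i+1}+1$ plays the same role for type $B$), and (iii) fixing an explicit threshold beyond which $(\beta_1-1)^{i}$ is dominated by that constant, leaving only finitely many cases to verify numerically. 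Finally, since $\xi$ is irrational while $R(n)$ and $n$ are integers, no equality $R(n) = \xi n$ can occur, so every inequality is in fact strict and the stated bound $R(n) \le \xi n$ follows.
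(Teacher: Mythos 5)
The paper states this corollary without any proof, leaving it as an immediate consequence of Theorem~\ref{recur}, so your proposal is best judged as supplying the intended derivation --- and it does so correctly, by the only natural route: reduce to the left endpoints $B_i$ and $C_{i+1}+1$ of the intervals on which $R$ is constant, and kill the leading $\beta_1$-terms exactly. I checked the key ingredients and they all hold: the closed form $5B_{2k+3}=7X_{2k+3}-X_{2k+2}+3X_{2k+1}+1$ (and its companion $5B_{2k+4}=7X_{2k+4}-X_{2k+3}+3X_{2k+2}+4$), the cancellation identity $5\beta_1^6=\xi\,(7\beta_1^2-\beta_1+3)=45\beta_1^2-10\beta_1+25$ (using $\beta_1^3=2\beta_1^2-\beta_1+1$ and $\xi=\beta_1^4-\beta_1^2$), and --- the genuinely subtle point you correctly flagged --- that the shift to $C_{i+1}+1$ is indispensable for the second family: for instance $\sum_{0\le j\le m}X_{4j+2}=(-X_{4m+2}+2X_{4m+3}+X_{4m+4}-3)/5$, so the surviving constant is $\tfrac{2}{5}\xi>0$ only after adding $1$, whereas $\xi C_{i+1}$ alone falls short of $X_{i+4}+X_{i+2}$ (e.g., $\xi\cdot 223\doteq 1428.2<1432$); together with the exponentially small remainder bound $|\epsilon_n|<(\beta_1-1)^n$ and a finite check of small indices (which is needed, since the margins are as small as $\xi\cdot 55-351\doteq 1.24$), this completes the proof exactly as you outline.
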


The appearance
function $A(n)$ is defined to be the smallest
positive integer $m$ such that every occurrence
of a length-$n$ factor $x$ begins at a position
$\leq m$.
We now compute it for the word $\bf p$.
\begin{theorem}
Define the sequences $(B_n)$ and $(C_n)$ as
in Theorem~\ref{recur}.
We have
$$A(n) = 
\begin{cases}
2, & \text{if $n= 1$;} \\
4, & \text{if $n=2$; } \\
7, & \text{if $n=3$; } \\
11, & \text{if $n = 4$;} \\
13, & \text{if $n = 5$;} \\
X_{2i+4} - 1, & \text{if $B_{2i+1} < n \leq
    C_{2i+2}$ for $i \geq 1$;} \\
C_{2i+4}, & \text{if $C_{2i+2} < n < B_{2i+2}$
    for $i \geq 1$;} \\
X_{2i+5} - 1, & \text{if $B_{2i+2} \leq n \leq  
C_{2i+3}$ for $i \geq 1$}; \\
C_{2i+5}, & \text{if $C_{2i+3} < n \leq B_{2i+3}$ for $i \geq 1$}.
\end{cases}
$$
\end{theorem}

\begin{corollary}
We have $A(n) \leq \zeta n$ for
$\zeta = (2\beta_1^2-2\beta_1 + 1) \doteq 3.6494359$.
\end{corollary}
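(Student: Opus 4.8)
The plan is to reduce the claimed inequality to a finite check together with the computation of four limiting ratios. Since the appearance function $A(n)$ of the preceding theorem is piecewise constant in $n$ — on each interval of the case analysis it equals a fixed value $X_m-1$ or $C_m$ — the ratio $A(n)/n$ is decreasing on each interval and hence maximized at the interval's left endpoint, where $n$ is smallest. So it suffices to bound $A/n_{\min}$ on each of the four asymptotic families (together with the smallest admissible index $i$), and to handle the small values $n\le 5$ by direct substitution.

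First I would obtain closed forms for the endpoint sequences $B_m$ and $C_m$ as sums of $X$ values by reading their $P4$ representations bit by bit; for instance $B_{2i+1}=X_2+\sum_{j=1}^i X_{2j+1}$ and $C_{4i+2}=X_{4i+2}+\sum_{k=1}^i X_{4k-2}$, with analogous expressions for the other residues. Each geometric-type sum admits a closed form via the same linear-recurrence identity used in the proof of Theorem~\ref{gamma} (e.g.\ $\sum_{1\le j\le n}X_{2j}=(3X_{2n}-X_{2n+1}+2X_{2n+2}-6)/5$). Combined with $X_n=\alpha_1\beta_1^n+O((\beta_1-1)^n)$, where $\alpha_1=(\beta_1^2+6\beta_1+3)/23$ and $|X_n-\alpha_1\beta_1^n|<(\beta_1-1)^n$ as established earlier, these yield the leading asymptotics $B_m\sim \alpha_1\beta_1^m\cdot\frac{\beta_1^2}{\beta_1^2-1}$ and $C_m\sim \alpha_1\beta_1^m\cdot\frac{\beta_1^4}{\beta_1^4-1}$.

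Feeding these into the four cases, and using $\beta_1^3=2\beta_1^2-\beta_1+1$ to rewrite $\zeta=2\beta_1^2-2\beta_1+1=\beta_1^3-\beta_1=\beta_1(\beta_1^2-1)$, one finds the following. In the two families $A=X_{2i+4}-1$ over $B_{2i+1}<n$ and $A=X_{2i+5}-1$ over $B_{2i+2}\le n$, the limiting ratio is $\beta_1^3\cdot\frac{\beta_1^2-1}{\beta_1^2}=\beta_1(\beta_1^2-1)=\zeta$; in the two families $A=C_{2i+4}$ over $C_{2i+2}<n$ and $A=C_{2i+5}$ over $C_{2i+3}<n$, the ratio tends to $\beta_1^2\doteq 3.08<\zeta$. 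Thus the supremum of $A(n)/n$ equals $\zeta$, attained only in the limit, and it is the first two (odd/even) families that are binding.

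The hard part will be upgrading these limits to an exact inequality valid for every $n$, not merely asymptotically. The $-1$ in the numerators $X_m-1$ and the $+1$ from evaluating at $n=B_{\cdots}+1$ both lower the ratio, but I must verify that the subdominant, exponentially small corrections never push a finite ratio above $\zeta$. To do this I would carry the bound $|X_n-\alpha_1\beta_1^n|<(\beta_1-1)^n$ through the exact closed forms for $B_m$ and $C_m$, checking that in the two binding families the correction to $A(n)/n$ is strictly negative for all $i\ge 1$, so that $A(n)/n<\zeta$ throughout; the finitely many remaining small cases are then confirmed by hand.
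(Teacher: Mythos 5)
The paper states this corollary with no proof at all---it is left as an implicit consequence of the preceding theorem on $A(n)$---so your proposal is supplying the missing argument rather than paralleling an explicit one, and it is the natural argument, in the same spirit as the paper's proof of Theorem~\ref{gamma}. All of your quantitative claims check out: the closed forms $B_{2i+1}=X_2+\sum_{j=1}^i X_{2j+1}$ and $C_{4i+2}=X_{4i+2}+\sum_{k=1}^i X_{4k-2}$ are what the $P4$-representations give; the asymptotics $B_m\sim\alpha_1\beta_1^m\,\beta_1^2/(\beta_1^2-1)$ and $C_m\sim\alpha_1\beta_1^m\,\beta_1^4/(\beta_1^4-1)$ hold uniformly over the residue classes of $m$ (worth saying explicitly, since $B$ and $C$ are defined by different patterns in each class); the identity $\zeta=\beta_1^3-\beta_1=\beta_1(\beta_1^2-1)$ follows from $\beta_1^3=2\beta_1^2-\beta_1+1$; and the four limiting ratios are $\zeta,\ \beta_1^2,\ \zeta,\ \beta_1^2$ as you state, so only the two ``$X-1$'' families are binding. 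The step you flag as unexecuted---showing the corrections never push a finite ratio above $\zeta$---does go through by exactly the method you describe, and here is the decisive computation for the first binding family: the analogue of the paper's even-index identity is $\sum_{1\le j\le i}X_{2j+1}=(3X_{2i+1}-X_{2i+2}+2X_{2i+3}-9)/5$, so $B_{2i+1}+1=(3X_{2i+1}-X_{2i+2}+2X_{2i+3}+6)/5$; since $(3-\beta_1+2\beta_1^2)/5=\beta_1^2/(\beta_1^2-1)$ exactly (this is the paper's $\gamma$) and $\zeta\cdot\beta_1^2/(\beta_1^2-1)=\beta_1^3$, the dominant terms of $X_{2i+4}-1$ and $\zeta(B_{2i+1}+1)$ cancel exactly, leaving $X_{2i+4}-1-\zeta(B_{2i+1}+1)=-\tfrac{6}{5}\zeta-1+E_i$ where, by $|X_n-\alpha_1\beta_1^n|<(\beta_1-1)^n$, the error satisfies $|E_i|<4(\beta_1-1)^{2i+1}<2$ for $i\ge1$, while $-\tfrac{6}{5}\zeta-1\doteq-5.38$; the second binding family is identical in structure and the two $\beta_1^2$-families have a genuinely smaller limit, so no delicacy is needed there. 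So your plan is correct and, once this routine computation is written out, complete; the only cosmetic caution is to keep the left endpoints straight ($n=B_{2i+1}+1$ in the strict-inequality family versus $n=B_{2i+2}$ in the non-strict one), which your write-up already does.
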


\section{Characterization of factors}
\label{characterization}

Using the convention in~\cite{Badkobeh&Ochem:2015}, we say that a factorial
language $L$ over $\Sigma_k$ defined by forbidden factors \emph{characterizes}
a morphic word $w$ if $w$ belongs to $L$ and every recurrent
factor of an infinite word in $L$ is a factor of $w$.

\begin{theorem}\label{lemmap}
The language $L$ of cube-free ternary words avoiding\\
$F = \{00, 11, 22, 20, 212, 0101, 02102, 121012, 01021010, 21021012102\}$ characterizes the word~${\bf p}$.
\end{theorem}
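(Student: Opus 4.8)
The plan is to verify the two requirements in the definition of ``characterizes'' separately. The \emph{forward} direction, that $\mathbf{p}\in L$, is routine: cube-freeness is immediate from Theorem~\ref{gamma}, since $\cexp(\mathbf{p})=\gamma+1\doteq 2.48<3$, and avoidance of each factor in $F$ can be checked directly, because every element of $F$ has length at most $11$, so by Theorem~\ref{subword} there are only $2\cdot 11+1=23$ factors of $\mathbf{p}$ of that length to inspect (equivalently, one runs a short \texttt{Walnut} query against \texttt{PI}). Since $\mathbf{p}$ is the fixed point of a primitive morphism it is uniformly recurrent, so this already yields $\operatorname{Fac}(\mathbf{p})\subseteq L$. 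The substance of the theorem is the \emph{backward} direction: every recurrent factor of an infinite word in $L$ is a factor of $\mathbf{p}$. My approach is desubstitution followed by an infinite-descent argument.

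The first tool is a \emph{unique parsing} (desubstitution) lemma. The short forbidden factors $00,11,22,20$ force the set of length-$2$ factors of any word of $L$ to be contained in $\{01,02,10,12,21\}$. From this one reads off three structural facts: every $2$ is immediately followed by a $1$; every $1$ terminates a block; and a $0$ is a lone block exactly when it is followed by $2$, whereas a $0$ followed by $1$ begins the block $01$. Consequently, given a recurrent factor $u$ of some infinite word of $L$, I first pass (by recurrence and compactness) to a bi-infinite word $\mathbf{w}$, all of whose factors lie in $L$, with $u\in\operatorname{Fac}(\mathbf{w})$; the facts above then give a \emph{unique} factorization of $\mathbf{w}$ into the blocks $\{0,\,01,\,21\}=\{h(2),h(0),h(1)\}$. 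Replacing each block by its $h$-preimage produces a word $\mathbf{w}'$ with $\mathbf{w}=h(\mathbf{w}')$ up to a shift.

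The second tool is a \emph{preservation} lemma: $\mathbf{w}'\in L$. Cube-freeness descends at once, since a cube $xxx$ in $\mathbf{w}'$ would give the cube $h(x)^3=h(xxx)$ in $\mathbf{w}$. For the forbidden factors one argues contrapositively: if some $v\in F$ occurred in $\mathbf{w}'$, then $h(v)$, \emph{together with its forced neighbouring letters}, would occur in $\mathbf{w}=h(\mathbf{w}')$ and would itself contain a member of $F$. For example $h(00)=0101$ contains $0101$, $h(11)=2121$ contains $212$, and $h(0101)=01210121$ contains $121012$; the longer cases need the forced context, e.g.\ a $1$-preceded occurrence of $121012$ forces $h(2\cdot121012\cdot1)=0\,2102101210\,21$, which contains $21021012102\in F$. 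This finite but fiddly case analysis, carried out for each of the ten elements of $F$, is the \textbf{main obstacle}: it is exactly here that the precise shape of $F$ matters, and one must check both that every forbidden preimage propagates to a forbidden image and that no further forbidden configuration is needed (completeness of $F$).

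With both lemmas in hand the descent closes the argument. Iterating desubstitution gives bi-infinite words $\mathbf{w}=\mathbf{w}_0,\mathbf{w}_1,\dots$ in $L$ with $\mathbf{w}=h^{k}(\mathbf{w}_k)$, so the fixed factor $u$ is a factor of $h^{k}(u_k)$ for some factor $u_k$ of $\mathbf{w}_k$. Because $00$ is forbidden, the length-$1$ blocks ($C=h(2)$) never occur consecutively, which yields the contraction $|u_k|\le \tfrac23|u_{k-1}|+\tfrac53$; hence after finitely many steps $|u_k|$ is bounded by a fixed constant. A finite verification against $\operatorname{Fac}(\mathbf{p})$, again using Theorem~\ref{subword}, shows every such short factor of an $L$-word is a factor of $\mathbf{p}$ (and single letters $a$ give $h^{k}(a)\in\operatorname{Fac}(\mathbf{p})$, since $h^{k}(0)$ is a prefix of $\mathbf{p}$ and $h^{k}(1),h^{k}(2)$ occur in $h^{k}(\mathbf{p})=\mathbf{p}$). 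Therefore $u\in\operatorname{Fac}(\mathbf{p})$, completing the characterization.
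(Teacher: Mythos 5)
Your overall architecture coincides with the paper's proof: show $\mathbf{p}\in L$ (cube-freeness from Theorem~\ref{gamma}, plus a finite check of the factors in $F$); use the short forbidden factors $\{00,11,22,20,212\}$ to force every bi-infinite word $w\in L$ into $\{01,0121,021\}^\omega\subseteq\{01,21,0\}^\omega$, so that $w=h(v)$; prove the preservation lemma $v\in L$; and iterate. The paper compresses the final iteration into the single sentence ``it suffices to show that $L$ contains $v$ too,'' whereas you make the descent explicit via the contraction $|u_k|\le\tfrac23|u_{k-1}|+\tfrac53$ and a terminal finite check of short words against $\operatorname{Fac}(\mathbf{p})$; that is a legitimate and slightly more careful way to close the loop, but it is the same method, not a different one.

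The genuine gap is that the preservation lemma --- the step you label ``the main obstacle'' and defer --- is not a routine propagation check; it is the entire mathematical content of the theorem, and your sketch only exhibits the easy instances. In the paper's case analysis, the two longest elements of $F$, namely $01021010$ and $21021012102$, require chains of eight to ten successive forced extensions of the hypothetical occurrence in $v$, and three of those extensions are \emph{not} forced by $F$-avoidance at all: they are forced by cube-freeness of $v$ (one must extend ``to avoid $(21010)^3$,'' ``to avoid $(02101)^3$,'' and ``to avoid $(1021012)^3$''). Moreover, in these two cases the final contradiction is not a member of $F$ appearing in $w$ but a \emph{cube} in $w$: $h(1210102101021012)=2(102101210)^3$ and $h(102101210210121021010)=(210102101210)^3 1$. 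Your proposal gives no indication that this interplay between $F$-avoidance and cube-freeness is needed, nor any reason to believe the forcing chains terminate in a contradiction; asserting that the ten cases are ``fiddly but finite'' is not a proof of them, and without cases like these the lemma, and hence the theorem, is unproved. (A small slip in your one worked longer case: an occurrence of $121012$ in $v$ need not be preceded by $2$ --- the letter before $1$ could also be $0$; what is forced is only the following $1$, and $h(1210121)=210210121021$ already contains $21021012102$, so the prepended $2$ is both unjustified and unnecessary.)
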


\begin{proof}
First, we check that $L$ contains $\bf p$. Indeed, ${\bf p}$ contains no
factor in $F$ and Theorem~\ref{gamma} implies that ${\bf p}$ is cube-free.

Now consider a bi-infinite word $w$ in $L$.
Since $w$ avoids $\{00, 11, 22, 20, 212\}$, the only factors of $w$ of the form $0z0$
with $z\in\{1,2\}^*$ are $010$, $01210$, $0210$.
So $w\in\{01,0121,021\}^\omega$ and thus $w\in\{01,21,0\}^\omega$.
So we write $w=h(v)$ where $h$ is the morphism
$0 \rightarrow 01$, $1 \rightarrow 21$, $2 \rightarrow 0$.

Now it suffices to show that $L$ contains $v$ too.
Since $w$ is cube-free, its pre-image $v$ is also cube-free.

To show that $v$ avoids $F$, we consider every $f\in F$
and we show by contradiction that $f$ is not a factor $v$.

\begin{enumerate}[(a)]
 \item if $v$ contains $00$, then $h(00)=0101\in F$.\label{00}
 \item if $v$ contains $11$, then $h(11)=2121$ contains $212\in F$.\label{11}
 \item if $v$ contains $22$, then $h(22)=00\in F$.\label{22}
 \item if $v$ contains $20$, then $h(20)=001$ contains $00\in F$.\label{20}
 \item if $v$ contains $212$, then $v$ contains $2121$ by (\ref{22}) and (\ref{20}).\\ $h(2121)=021021$ contains $02102\in F$.\label{212}
 \item if $v$ contains $0101$, then $h(0101)=01210121$ contains $121012\in F$.\label{0101}
 \item if $v$ contains $02102$, then $h(02102)=01021010\in F$.\label{02102}
 \item if $v$ contains $121012$, then $v$ contains $1210121$ by (\ref{22}) and (\ref{20}).\\ $h(1210121)=210210121021$ contains $21021012102\in F$.\label{121012}
 \item if $v$ contains $01021010$, then $v$ contains $010210102$ by (\ref{00}) and (\ref{0101}).\\
 $v$ contains $0102101021$ by (\ref{22}) and (\ref{20}).\\
 $v$ contains $01021010210$ by (\ref{11}) and (\ref{212}).\\
 $v$ contains $010210102101$ by (\ref{00}) and (\ref{02102}).\\
 $v$ contains $1010210102101$ by (\ref{00}) and (\ref{20}).\\
 $v$ contains $21010210102101$ by (\ref{0101}) and (\ref{11}).\\
 $v$ contains $210102101021012$ by (\ref{11}) and to avoid $(21010)^3$.\\
 $v$ contains $1210102101021012$ by (\ref{22}) and to avoid $(02101)^3$.\\
 $h(1210102101021012)=2102101210102101210102101210=2(102101210)^3$.
 \item if $v$ contains $21021012102$, then $v$ contains $121021012102$ by (\ref{02102}) and (\ref{22}).\\
 $v$ contains $0121021012102$ by (\ref{00}) and (\ref{212}).\\
 $v$ contains $10121021012102$ by (\ref{00}) and (\ref{20}).\\
 $v$ contains $210121021012102$ by (\ref{0101}) and (\ref{11}).\\
 $v$ contains $0210121021012102$ by (\ref{121012}) and (\ref{22}).\\
 $v$ contains $10210121021012102$ by (\ref{00}) and (\ref{20}).\\
 $v$ contains $102101210210121021$ by (\ref{22}) and (\ref{20}).\\
 $v$ contains $1021012102101210210$ by (\ref{11}) and (\ref{212}).\\
 $v$ contains $10210121021012102101$ by (\ref{00}) and (\ref{02102}).\\
 $v$ contains $102101210210121021010$ by (\ref{11}) and to avoid $(1021012)^3$.\\
 $h(102101210210121021010)=2101021012102101021012102101021012101=(210102101210)^31$.
\end{enumerate}
\end{proof}




\end{document}